\newif\ifConferenceVersion
\def\@IEEEsectpunct{} 
\def\paragraph{\@startsection{paragraph}{4}{\z@}{-3.25ex \@plus-1ex \@minus-.2ex}%
	{-1em}{\normalfont\normalsize\sffamily\bfseries}}
\renewcommand*{\@fnsymbol}[1]{}
\let\originalleft\left
\let\originalright\right
\renewcommand{\left}{\mathopen{}\mathclose\bgroup\originalleft}
\renewcommand{\right}{\aftergroup\egroup\originalright}
\newcommandx*{\LDAUOmicron}[2][1=@pkling_false]{\mathrm{O}\ifthenelse{\equal{#1}{small}}{\bigl(#2\bigr)}{\left(#2\right)}}
\newcommandx*{\LDAUomicron}[2][1=@pkling_false]{\mathrm{o}\ifthenelse{\equal{#1}{small}}{\bigl(#2\bigr)}{\left(#2\right)}}
\newcommandx*{\LDAUOmega}[2][1=@pkling_false]{\Omega\ifthenelse{\equal{#1}{small}}{\bigl(#2\bigr)}{\left(#2\right)}}
\newcommandx*{\LDAUomega}[2][1=@pkling_false]{\omega\ifthenelse{\equal{#1}{small}}{\bigl(#2\bigr)}{\left(#2\right)}}
\newcommandx*{\LDAUTheta}[2][1=@pkling_false]{\Theta\ifthenelse{\equal{#1}{small}}{\bigl(#2\bigr)}{\left(#2\right)}}
\newtheorem{theorem}{Theorem}[section]
\newtheorem{lemma}[theorem]{Lemma}
\newtheorem{corollary}[theorem]{Corollary}
\newtheorem{definition}[theorem]{Definition}
\newtheorem{proposition}[theorem]{Proposition}
\newtheorem{observation}[theorem]{Observation}
\newcommand{\bigO}{\mathcal{O}}
\newcommand*{\N}{\mathbb{N}}
\newcommand{\MIN}{\textsc{Min}}
\newcommand{\MAX}{\textsc{Max}}
\newcommand{\Subprotocol}{\textsc{SubProtocol}\xspace}
\newcommand{\Dense}{\textsc{DenseProtocol}\xspace}
\newcommand{\Scattered}{\textsc{Top-K-Protocol}\xspace}
\newcommand{\etopk}{$\varepsilon$-Top-$k$-Position Monitoring\xspace}
\author{Alexander M\"acker \and Manuel Malatyali\thanks{This work was partially supported by the German Research Foundation (DFG) within the Priority Program ``Algorithms for Big Data'' (SPP 1736) and by the EU within FET project MULTIPLEX under contract no.\ 317532.} \and Friedhelm Meyer auf der Heide \\ [0.4em]
	Heinz Nixdorf Institute \& 	Computer Science Department\\
	Paderborn University, Germany\\[0.2em]
	\{amaecker, malatya, fmadh\}@hni.upb.de}
\date{}
\begin{document}
\title{On Competitive Algorithms for Approximations of Top-$k$-Position Monitoring of Distributed Streams}
	
\maketitle

\begin{abstract}
Consider the continuous distributed monitoring model in which $n$ distributed nodes, receiving individual data streams, are connected to a designated server.
The server is asked to continuously monitor a function defined over the values observed across all streams while minimizing the communication.
We study a variant in which the server is equipped with a broadcast channel and is supposed to keep track of an approximation of the set of nodes currently observing the $k$ largest values.
Such an approximate set is exact except for some imprecision in an $\varepsilon$-neighborhood of the $k$-th largest value.
This approximation of the Top-$k$-Position Monitoring Problem is of interest in cases where marginal changes (e.g.\ due to noise) in observed values can be ignored so that monitoring an approximation is sufficient and can reduce communication.

This paper extends our results from [6], where we have developed a filter-based online algorithm for the (exact) Top-k-Position Monitoring Problem. 
There we have presented a competitive analysis of our algorithm against an offline adversary that also is restricted to filter-based algorithms. 
Our new algorithms as well as their analyses use new methods. 
We analyze their competitiveness against adversaries that use both exact and approximate filter-based algorithms, and observe severe differences between the respective powers of these adversaries. 
\end{abstract}

\section{Introduction}

We consider a setting in which $n$ distributed nodes are connected to a central server.
Each node continuously observes a data stream and the server is asked to keep track of the value of some function defined over all streams.
In order to fulfill this task, nodes can communicate to the server, while the server can employ a broadcast channel to send a message to all nodes.

In an earlier paper~\cite{mmm}, we introduced and studied a problem called Top-$k$-Position Monitoring in which, at any time $t$, the server is interested in monitoring the $k$ nodes that are observing the largest values at this particular time $t$.
As a motivating example, picture a scenario in which a central load balancer within a local cluster of webservers is interested in keeping track of those nodes which are facing the highest loads.
We proposed an algorithm based on the notion of filters and analyzed its competitiveness with respect to an optimal filter-based offline algorithm. 
Filters are assigned by the server and are used as a means to indicate the nodes when they can resign to send updates; this particularly reduces communication when observed values are ``similar'' to the values observed in the previous time steps.

In this paper, we broaden the problem and investigate the monitoring of an approximation of the Top-$k$-Positions.
We study the problem of $\varepsilon$-Top-$k$-Position Monitoring, in which the server is supposed to maintain a subset of $k$ nodes such that all nodes observing ``clearly larger'' values than the node which observed the $k$-th largest value are within this set and no node observing a ``clearly smaller'' value belongs to this set.
Here, smaller/larger is meant to be understood with respect to $\varepsilon$ and the $k$-th largest value observed.
A detailed definition is given in Sect.~\ref{sec:model}.
Relaxing the problem in this direction can reduce communication while, in many cases, marginal or insignificant changes (e.g.\ due to noise) in observed values can be ignored and justify the sufficiency of an approximation.
Examples are situations where lots of nodes observe values oscillating around the $k$-th largest value and where this observation is not of any qualitative relevance for the server.
We design and analyze algorithms for $\varepsilon$-Top-$k$-Position Monitoring and, although we use these very tools of filters and competitive analysis \cite{mmm}, the imprecision/approximation requires fundamentally different online strategies for defining filters in order to obtain efficient solutions.

\subsection{Our Contribution}
In this paper we investigate a class of algorithms that are based on using filters and study their efficiency in terms of competitive analysis.

As a first technical contribution we analyze an algorithm (Sect.~\ref{sec:existence}) which allows the server to decide the logical disjunction of the (binary) values observed by the distributed nodes.
It uses a logarithmic number of rounds and a constant number of messages on expectation.
As a by-product, using this algorithm, the result on the competitiveness of the filter-based online algorithm in \cite{mmm} can be reduced from $\bigO(k \log n + \log \Delta \log n)$ to $\bigO(k \log n + \log \Delta)$,
for observed values from $\{0,1, \ldots, \Delta\}$.

Second, we also propose an online algorithm (Sect.~\ref{sec:scatteredData}) that is allowed to introduce an error of $\varepsilon \in (0, 1/2]$ in the output and compare it to an offline algorithm that solves the exact Top-$k$-Position Monitoring problem.
We show that this algorithm is $\bigO(k \log n + \log \log \Delta + \log \frac{1}{\varepsilon})$-competitive. 
Note that this imprecision allows to bring the $\log \Delta$ in the upper bound down to $\log \log \Delta$ for any constant $\varepsilon$.

We also investigate the setting in which also the offline algorithm is allowed to have an error in the output (Sect.~\ref{sec:denseData}).
We first show that these results are not comparable to previous results; we prove a lower bound on the competitiveness of $\Omega(n / k)$.
Our third and main technical contribution is an algorithm with a competitiveness of $\bigO(n^2 \log (\varepsilon \Delta) + n \log^2 (\varepsilon \Delta) + \log \log \Delta + \log \frac{1}{\varepsilon} )$ if the online and the offline algorithm may use an error of $\varepsilon$.

However, if we slightly decrease the allowed error for the offline algorithm, the lower bound on the competitiveness of $\Omega(n / k)$ still holds, while the upper bound is reduced to $\bigO(n + k \log n + \log \log \Delta + \log \frac{1}{\varepsilon})$.

\subsection{Related Work}
\label{sec:relatedWork}
Efficient computation of functions on big datasets in terms of streams has turned out to be an important topic of research with applications in network traffic analysis, text mining or databases (e.g.\ \cite{sanders} and \cite{muthu}). 

The \emph{Continuous Monitoring Model}, which we consider in this paper, was introduced by Cormode et al.\ \cite{cormodeSurvey} to model systems comprised of a server and $n$ nodes observing \emph{distributed} data streams.
The primary goal addressed within this model is the continuous computation of a function depending on the information available across all $n$ data streams up to the current time at a dedicated server. 
Subject to this main concern, the minimization of the overall number of messages exchanged between the nodes and the server usually determines the efficiency of a streaming algorithm. 
We refer to this model and enhance it by a broadcast channel as proposed by Cormode et al.\ in~\cite{cormodeFunction}. 

An important class of problems investigated in literature are threshold computations where the server is supposed to decide whether the current function value has reached some given threshold $\tau$. 
For monotone functions such as monitoring the number of distinct values or the sum over all values, exact characterizations in the deterministic case are known~\cite{cormodeSurvey,cormodeFunction}. 
However, non-monotone functions, e.g., the entropy~\cite{chakrabarti}, turned out to be much more complex to handle.

A general approach to reduce the communication when monitoring distrib\-uted streams is proposed in \cite{zhangFilter}. 
Zhang et al.\ introduce the notion of \emph{filters}, which are also an integral part of our algorithms.
They consider the problem of continuous skyline maintenance, in which a server is supposed to continuously maintain the skyline of dynamic objects. 
As they aim at minimizing the communication overhead between the server and the objects, they use a filter method that helps in avoiding the transmission of updates in case these updates cannot influence the skyline. 
More precisely, the objects are points of a $d$-dimensional space and filters are hyper-rectangles assigned by the server to the objects such that  as long as these points are within the assigned hyper-rectangle, updates need not be communicated to the server. 

Despite its online nature, by now streaming algorithms are barely studied in terms of competitiveness.
In their work \cite{yi}, Yi and Zhang were the first to study streaming algorithms with respect to their competitiveness and recently this approach was also applied in a few papers (\cite{lam, tang, mmm, giannakopoulos}).
In their model \cite{yi}, there is one node and one server and the goal is to keep the server informed about the current value of a function $f: \mathbb{Z}^+ \to \mathbb{Z}^d$ that is observed by the node and changes its value over time, while minimizing the number of messages. 
Yi and Zhang present an algorithm that is $\bigO(d^2 \log(d \cdot \delta))$-competitive if the last value received by the server might deviate by $\delta$ from the current value of $f$.
Recently, Tang et al.\ \cite{tang} extended this work by Yi and Zhang for the two-party setting to the distributed case.
They consider a model in which the server is supposed to track the current value of a (one-dimensional) function that is defined over a set of $n$ functions observed at the distributed nodes.
Among other things, they propose an algorithm for the case of a tree-topology in which the distributed nodes are the leaves of a tree connecting them to the server. 
They show that on any instance $I$ their algorithm incurs communication cost that is by a factor of $\bigO(h_{max}\log \delta)$, where $h_{max}$ represents the maximimum length of a path in the tree, larger than those of the best solution obtained by an online algorithm on $I$.

Following the idea of studying competitive algorithms for monitoring streams and the notion of filters, Lam et al.\ \cite{lam} present an algorithm for online dominance tracking of distributed streams.
In this problem a server always has to be informed about the dominance relationship between $n$ distributed nodes each observing an online stream of $d$-dimensional values. 
Their algorithm is based on the idea of filters and they show that a mid-point strategy, which sets filters to be the mid-point between neighboring nodes, is $\bigO(d \log U)$-competitive with respect to the number of messages sent in comparison to an offline algorithm that sets filters optimally. 

While we loosely motivated our search for approximate solutions by noise in the introduction, in other problems noise is a major concern and explicitly addressed.
For example, consider streaming algorithms for estimating statistical parameters like frequency moments \cite{zhangNoise}.
In such problems, certain elements from the universe may appear in different forms due to noise and thus, should actually be treated as the same element.

\section{Preliminaries}
\label{sec:model}
In our setting there are $n$ distributed nodes $\{1, \ldots, n\}$.
Each node $i$ receives a continuous data stream $(v_{i}^1, v_i^2, v_i^3 \ldots)$, which can be exclusively observed by node $i$.
At time $t$, $v_i^t \in \mathbb{N}$ is observed and no $v_i^{t'}$, $t' > t$, is known.
We omit the index $t$ if it is clear from the context. 

Following the model in \cite{cormodeFunction}, we allow that between any two consecutive time steps, a \emph{communication protocol} exchanging messages between the server and the nodes may take place.
The communication protocol is allowed to use an amount of rounds which is polylogarithmic in $n$ and $\max_{1\leq i\leq n}(v_i^t)$.
The nodes can communicate to the server while the server can communicate to single nodes or utilize a broadcast channel to communicate a message that is received by all nodes at the same time.
These communication methods incur unit communication cost per message, we assume instant delivery, and a message at time $t$ is allowed to have a size at most logarithmic in $n$ and $\max_{1\leq i\leq n}(v_i^t)$. 

\paragraph*{Problem Description} 
Consider the Top-$k$-Position Monitoring problem \cite{mmm}, in which the server is asked to keep track of the set of nodes currently holding the $k$ largest values.
We relax this definition and study an approximate variant of the problem in which this set is exact except for nodes in a small neighborhood around the $k$-th largest value. 
We denote by $\pi(k, t)$ the node which observes the $k$-th largest value at time $t$ and denote by top-$k \coloneqq \{ i \in \{1, \ldots, k\} : \pi (i, t) \}$ the nodes observing the $k$ largest values.
Given an error $0<\varepsilon<1$, for a time $t$ we denote by $E(t) \coloneqq (\frac{1}{1-\varepsilon} v_{\pi(k, t)}^t, \infty]$ the range of values that are clearly larger than the $k$-th largest value and by $A(t) \coloneqq [(1-\varepsilon) v_{\pi(k, t)}^t, \frac{1}{1-\varepsilon} v_{\pi(k, t)}^t]$ the \emph{$\varepsilon$-neighborhood} around the $k$-th largest value.
Furthermore, we denote by $\mathcal{K}(t) \coloneqq \{ i : v_i^t \in A(t)\}$ the nodes in the $\varepsilon$-neighborhood around the $k$-th largest value.
Then, at any time $t$, the server is supposed to know the nodes $\mathcal{F}(t) = \mathcal{F}_E(t)\: \cup \: \mathcal{F}_A(t) =  \{i_1, \ldots, i_k\}$ according to the following properties:

\begin{enumerate}
	\item $\mathcal{F}_E(t) = \{i : v_i^t \in E(t)\}$ and

	\item $\mathcal{F}_A(t) \subseteq \mathcal{K}(t) = \{ i : v_i^t \in A(t) \} $, such that $| \mathcal{F}_A(t)| = k - |\mathcal{F}_E(t)|$ holds.
	
\end{enumerate}
Denote by $\Delta$ the maximal value observed by some node (which may not be known beforehand).
We use $\mathcal{F}_1 = \mathcal{F}(t)$ if $t$ is clear from the context, $\mathcal{F}_2 = \{1,  \ldots, n\} \setminus \mathcal{F}(t)$, and call $\mathcal{F}^*$ the \emph{output} of an optimal offline algorithm.
If the $k$-th and the $(k+1)$-st largest value differ by more than $\varepsilon \, v_{\pi(k, t)}^t$, $\mathcal{F}(t)$ coincides with the set in the (exact) Top-$k$-Position Monitoring problem and hence, $\mathcal{F}(t)$ is unique.
We denote by $\sigma(t)  \coloneqq | \mathcal{K}(t) |$ the number of nodes at time $t$ which are in the $\varepsilon$-neighborhood of the $k$-th largest value and $\sigma \coloneqq \max_t \sigma(t)$.
Note that $|\mathcal{K}(t)| = 1$ implies that $\mathcal{F}(t)$ is unique.
Furthermore for solving the exact Top-$k$-Position Monitoring problem we assume that the values are distinct (at least by using the nodes' identifiers to break ties in case the same value is observed by several nodes).

\subsection{Filter-Based Algorithms \& Competitive Analysis}
A set of filters is a collection of intervals, one assigned to each node, such that as long as the observed values at each node are within its respective interval, the output $\mathcal{F}(t)$ need not change. 
For the problem at hand, this general idea of filters translates to the following definition.

\begin{definition}\cite{mmm}
\label{def:filters}
	For a fixed time $t$, a \emph{set of filters} is defined as an $n$-tuple of intervals $(F_1^t, \ldots, F_n^t)$, $F_i \subseteq \mathbb{N} \cup \{\infty \}$ and $v_i \in F_i$, 
	such that as long as the value of node $i$ only changes within its interval (i.e.\ $v_i \in F_i$), the value of the output $\mathcal{F}$ need not change.
\end{definition}

Observe that each pair of filters $\left(F_i,F_j\right)$ of nodes $i \in \mathcal{F}(t)$ and $j \notin \mathcal{F}(t)$ must be disjoint except for a small overlapping. 
This observation can be stated formally as follows.
\begin{observation}
	\label{le:apxFilterDef}
	For a fixed time $t$, an $n$-tuple of intervals is a set of filters if and only if for all pairs $i \in \mathcal{F}(t)$ and $j \notin \mathcal{F}(t)$ the following holds: $v_i \in F_i = [\ell_i, u_i]$, $v_j \in F_j = [\ell_j, u_j]$ and $\ell_i \geq (1-\varepsilon)u_j$.
\end{observation}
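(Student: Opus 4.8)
The plan is to prove the equivalence by establishing each implication separately, in both cases tracking how the $k$-th largest value — and therefore the sets $E$ and $A$ — can move when the observed values are perturbed within the filters.

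\textbf{Sufficiency of the inequality.} Assume $v_i\in F_i=[\ell_i,u_i]$, $v_j\in F_j=[\ell_j,u_j]$ and $\ell_i\ge(1-\varepsilon)u_j$ hold for every $i\in\mathcal{F}(t)$ and $j\notin\mathcal{F}(t)$, and fix an arbitrary configuration $(w_1,\dots,w_n)$ with $w_m\in F_m$ for all $m$. Put $a\coloneqq\min_{i\in\mathcal{F}(t)}w_i$. First I would note that $w_j\le\frac{1}{1-\varepsilon}a$ for every $j\notin\mathcal{F}(t)$, since for the minimizing $i$ we have $a=w_i\ge\ell_i\ge(1-\varepsilon)u_j\ge(1-\varepsilon)w_j$. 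Because the $k$ nodes of $\mathcal{F}(t)$ all have value $\ge a$, the new $k$-th largest value $c$ satisfies $c\ge a$; and because every node outside $\mathcal{F}(t)$ has value $\le\frac{1}{1-\varepsilon}a$ while the node of $\mathcal{F}(t)$ attaining $a$ has value $a\le\frac{1}{1-\varepsilon}a$, fewer than $k$ nodes exceed $\frac{1}{1-\varepsilon}a$, so also $c\le\frac{1}{1-\varepsilon}a$. The resulting sandwich $(1-\varepsilon)c\le a\le c\le\frac{1}{1-\varepsilon}a$ yields $w_i\ge a\ge(1-\varepsilon)c$ for all $i\in\mathcal{F}(t)$ and $w_j\le\frac{1}{1-\varepsilon}a\le\frac{1}{1-\varepsilon}c$ for all $j\notin\mathcal{F}(t)$; hence taking $\mathcal{F}_E$ to be the nodes of $\mathcal{F}(t)$ landing in the new range of clearly-larger values and $\mathcal{F}_A\coloneqq\mathcal{F}(t)\setminus\mathcal{F}_E$ (which is then forced into the new $\varepsilon$-neighborhood) shows that $\mathcal{F}(t)$ still meets both defining properties, so the output need not change.

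\textbf{Necessity of the inequality.} I would argue by contraposition: if some pair $i\in\mathcal{F}(t)$, $j\notin\mathcal{F}(t)$ has $\ell_i<(1-\varepsilon)u_j$, I want a configuration inside the filters for which $\mathcal{F}(t)$ is no longer a legal output. The natural witness sets $w_i=\ell_i$ and $w_j=u_j$; if $\mathcal{F}(t)$ were still legal, then writing $c$ for the new $k$-th largest value one would need $\ell_i\ge(1-\varepsilon)c$ (so $i$ may remain) and $u_j\le\frac{1}{1-\varepsilon}c$ (so $j$ is not forced in), i.e.\ $c\in[(1-\varepsilon)u_j,\frac{1}{1-\varepsilon}\ell_i]$. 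I then use the remaining $n-2$ nodes to drive $c$ out of this window: pushing them all to their lower filter endpoints makes $c<(1-\varepsilon)u_j$ — so $w_j$ is clearly larger than $c$ and $j$ must be reported — unless at least $k-1$ of these nodes are pinned so high that, pushing them instead to their upper endpoints, $c>\frac{1}{1-\varepsilon}\ell_i$, forcing $i$ out of the report. Either way $\mathcal{F}(t)$ violates the properties, so the tuple is not a set of filters.

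The step I expect to be the crux is this last one: the new $k$-th largest value $c$ is a global quantity determined by all $n$ filters at once, and the ``forbidden window'' $[(1-\varepsilon)u_j,\frac{1}{1-\varepsilon}\ell_i]$ is a genuine interval, so one must argue carefully that some admissible configuration places $c$ outside it. The case split above, together with degenerate filters (intervals collapsing to a single value inside the window) and the tie-breaking/distinctness convention on the observed values, is where the details concentrate; the sufficiency direction, by contrast, is a routine sandwich estimate.
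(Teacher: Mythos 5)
The paper states this as an unproved Observation (it is presented as an immediate formalization of the remark preceding it), so your argument must stand on its own. Your sufficiency direction is correct: with $a=\min_{i\in\mathcal{F}(t)}w_i$, the sandwich $a\le c\le\frac{1}{1-\varepsilon}a$ for the new $k$-th largest value $c$ is exactly the right invariant and yields both defining properties of the output. The gap is in the necessity direction, and it sits precisely where you flagged it: the dichotomy ``push the other $n-2$ nodes all the way down; if that fails, all the way up'' does not close. Whenever $(1-\varepsilon)^2u_j\le\ell_i<(1-\varepsilon)u_j$ the forbidden window $[(1-\varepsilon)u_j,\frac{1}{1-\varepsilon}\ell_i]$ is nonempty, and a third node whose filter is entirely contained in that window pins $c$ inside it in \emph{every} admissible configuration, so neither of your two extreme configurations (nor any other) can drive $c$ out. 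Concretely, take $n=3$, $k=2$, $\varepsilon=\frac{1}{2}$, $\mathcal{F}(t)=\{1,2\}$, $F_1=[39,1000]$, $F_2=[78,78]$, $F_3=[0,80]$. Here $\ell_1=39<40=(1-\varepsilon)u_3$, so the stated inequality fails for the pair $(1,3)$; yet for every $w_1\in[39,1000]$, $w_3\in[0,80]$ the second-largest value satisfies $c\le 80$, and whenever $w_1=39$ one additionally has $c\le 78$, from which $w_1\ge(1-\varepsilon)c$, $w_2\ge(1-\varepsilon)c$ and $w_3\le\frac{1}{1-\varepsilon}c$ follow in all cases. Hence $\{1,2\}$ remains a legal output throughout and the tuple \emph{is} a set of filters. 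So the ``only if'' direction is not merely unproven by your case split --- it is false as stated, and no refinement of the two-node witness can rescue it.

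What survives is weaker but is all the paper actually uses. Necessity does hold when the violation is large, i.e.\ $\ell_i<(1-\varepsilon)^2u_j$: then the window is empty and your witness $w_i=\ell_i$, $w_j=u_j$ already breaks $\mathcal{F}(t)$ regardless of the other nodes. It also holds in the two-filter setting of Proposition~\ref{ob:twofilters}, where all nodes of $\mathcal{F}(t)$ share $[\ell,\infty]$ and all others share $[0,u]$: placing the former at $\ell$ and the latter at $u$ forces $c\in\{\ell,u\}$ and a violation on one side or the other, which is the only instance of the ``only if'' direction invoked in the proof of Lemma~\ref{le:filterChange}. If you want to keep a true biconditional you must either restrict to such two-filter assignments or relax the right-hand side to $\ell_i\ge(1-\varepsilon)^2u_j$ in the necessity direction.
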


In our model, we assume that nodes are assigned such filters by the server. 
If a node observes a value that is larger than the upper bound of its filter, we say the node \emph{violates its filter from below}. 
A violation \emph{from above} is defined analogously.
If such a violation occurs, the node may report it and its current value to the server.
In contrast to \cite{mmm}, we allow the server to assign ``invalid'' filters, i.e., there are affected nodes that directly observe a filter-violation.
However, for such an algorithm to be correct, we demand that the intervals assigned to the nodes at the end of the protocol at time $t$ and thus, before observations at time $t+1$, constitute a (valid) set of filters. 
We call such an algorithm \emph{filter-based}.
Note that the fact that we allow invalid filters (in contrast to \cite{mmm}) simplifies the presentation of the algorithms in the following. 
However, using a constant overhead the protocols can be changed such that only (valid) filters are sent to the nodes. 

\paragraph*{Competitiveness}
To analyze the quality of our online algorithms, we use analysis based on competitiveness and compare the communication induced by the algorithms to that of an adversary's offline algorithm.

Similar to \cite{lam} and \cite{mmm}, we consider adversaries that are restricted to use filter-based offline algorithms and hence, OPT is lower bounded by the number of filter updates.
However, we compare our algorithms against several adversaries which differ in terms of whether their offline algorithm solves the exact Top-$k$-Position Monitoring Problem or \etopk.
The adversaries are assumed to be adaptive, i.e., values observed by a node are given by an adversary who knows the algorithm's code, the current state of each node and the server and the results of random experiments.

An online algorithm is said to have a competitiveness of $c$ if the number of messages is at most by a factor of $c$ larger than that of the adversary's offline algorithm.

\subsection{Observations and Lemmas}

Define for some fixed set $\mathcal{S} \subseteq \{1, \ldots, n\}$ the minimum of the values observed by nodes in $\mathcal{S}$ during a time period $[t, t']$ as $\MIN_\mathcal{S}(t, t')$ and the maximum of the values observed during the same period as $\MAX_\mathcal{S}(t, t')$.

\begin{definition}
\label{def:maxMin}
Let $t, t'$ be given times with $t' \geq t$.
For a subset of nodes $\mathcal{S} \subseteq \{1, \ldots, n\}$ the values $\MAX_\mathcal{S}(t, t') \coloneqq \max_{t \leq t^* \leq t'} \max_{i \in \mathcal{S}}  ( v_i^{t^*} )$ and $\MIN_\mathcal{S}(t, t')$ are defined analogously. 
\end{definition}
 
Observe that it is sufficient for an optimal offline algorithm to only make use of two different filters $F_1$ and $F_2$.

\begin{proposition}
\label{ob:twofilters}
Without loss of generality, we may assume that an optimal offline algorithm only uses two different filters at any time. 
\end{proposition}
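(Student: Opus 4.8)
The plan is to start from an arbitrary optimal filter-based offline algorithm and rewrite it, time step by time step, into one that uses at most two distinct intervals at every time without increasing its communication. Fix a time $t$, let $(F_1^t,\dots,F_n^t)$ with $F_i^t=[\ell_i^t,u_i^t]$ be the filters this algorithm maintains at time $t$ and let $\mathcal F(t)$ be its output. Put $\ell^{*}(t):=\min_{i\in\mathcal F(t)}\ell_i^t$ and $u^{*}(t):=\max_{j\notin\mathcal F(t)}u_j^t$, and let the rewritten algorithm assign at time $t$ the single interval $F_1:=[\ell^{*}(t),\infty]$ to every node in $\mathcal F(t)$, the single interval $F_2:=[0,u^{*}(t)]$ to every node not in $\mathcal F(t)$, and keep the output $\mathcal F(t)$. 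By construction only two intervals are ever in use simultaneously, so it remains to verify that this is still a correct filter-based algorithm and that it is no more expensive.

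For correctness I would invoke Observation~\ref{le:apxFilterDef}. The present values lie in the new filters: for $i\in\mathcal F(t)$ we have $v_i^t\in[\ell_i^t,u_i^t]\subseteq F_1$ since $\ell^{*}(t)\le\ell_i^t$, and for $j\notin\mathcal F(t)$ we have $v_j^t\in[\ell_j^t,u_j^t]\subseteq F_2$ since $u^{*}(t)\ge u_j^t$. Moreover, since $(F_1^t,\dots,F_n^t)$ is a set of filters for the output $\mathcal F(t)$, Observation~\ref{le:apxFilterDef} guarantees $\ell_i^t\ge(1-\varepsilon)u_j^t$ for every pair $i\in\mathcal F(t)$, $j\notin\mathcal F(t)$; taking the minimum over $i$ on the left and the maximum over $j$ on the right gives $\ell^{*}(t)\ge(1-\varepsilon)u^{*}(t)$. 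Hence the two-interval tuple satisfies the characterisation of Observation~\ref{le:apxFilterDef} for the output $\mathcal F(t)$ and is itself a set of filters, so the rewritten algorithm is filter-based and correct.

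For the cost I would argue that the rewritten algorithm has to communicate only at times where the original one does. If the original algorithm sends nothing between $t-1$ and $t$, then neither its filter configuration nor its maintained output changes, so $\ell^{*}$, $u^{*}$ and $\mathcal F$ are identical at $t$ and at $t-1$ and the rewritten configuration is unchanged as well. At a time where the original algorithm does communicate, the rewritten one broadcasts its at most two current intervals and additionally informs the nodes whose membership in $\mathcal F(t)$ changed; a node that enters or leaves the top set already received a new filter (of the appropriate type) in the original run, and $\ell^{*}$, $u^{*}$ can only move when the original algorithm changes a relevant individual filter, so these messages are charged to the original ones up to the constant overhead already mentioned in Section~\ref{sec:model}. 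Consequently the rewritten algorithm is again optimal, which proves the claim.

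I expect the delicate point to be precisely this last accounting step — in particular making sure that replacing many distinct ``top'' filters by the single widest one $[\ell^{*}(t),\infty]$ (and symmetrically on the bottom) never forces a fresh broadcast to all of the up to $k$ nodes that stayed in $\mathcal F(t)$, and that a node moving into $\mathcal F(t)$ indeed still fits into the current common top filter. The clean way to settle this is to keep the two common intervals as wide as Observation~\ref{le:apxFilterDef} permits and to shrink or widen them only lazily, exactly at the moments where a filter violation occurs in the two-filter run and in the original run simultaneously; the elementary estimates relating $\ell^{*}$, $u^{*}$ and the value of the node crossing the boundary of $\mathcal F(t)$ are then routine.
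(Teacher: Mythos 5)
Your proof is correct, but it reaches the two-filter normal form by a different route than the paper. You define the common intervals instant by instant from the original algorithm's filter \emph{endpoints} ($\ell^{*}(t)=\min_{i\in\mathcal F(t)}\ell_i^t$, $u^{*}(t)=\max_{j\notin\mathcal F(t)}u_j^t$), so the new filters contain the old ones pointwise, validity follows by aggregating Observation~\ref{le:apxFilterDef} over all pairs, and you then need a separate accounting step to charge the (broadcastable) updates of the two common intervals to the original algorithm's messages. The paper instead works per maximal no-communication interval $[t,t']$ of OPT and sets $F_1=[\MIN_{\mathcal F^*_1}(t,t'),\infty]$, $F_2=[0,\MAX_{\mathcal F^*_2}(t,t')]$ using the extreme \emph{observed values} over that interval (an inherently offline choice); feasibility of this pair is exactly the content of Lemma~\ref{le:filterChange}, and since every value observed in $[t,t']$ lies in the corresponding interval by construction, no communication is needed there at all, so the boundary accounting you worry about collapses to "communicate only when OPT does." Your version buys a construction that does not peek at future values and makes the instant-by-instant validity trivial via containment, at the price of the somewhat hand-wavy final charging argument (which does go through here because a single broadcast updates both intervals, and membership changes are already paid for by the original run); the paper's version buys a two-line feasibility argument and exact alignment of communication times, at the price of relying on Lemma~\ref{le:filterChange} and on the offline algorithm's knowledge of the future. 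Both are sound for the purpose the proposition serves.
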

\begin{proof}
Let $[t, t']$ be an interval during which $OPT$ does not communicate.
We fix its output $\mathcal{F}^*_1$ and define $\mathcal{F}^*_2 \coloneqq \{1, \ldots, n\} \setminus \mathcal{F}^*_1$.
If $OPT$ only uses two different filters throughout the interval, we are done. 
Otherwise, using $\mathcal{F}^*_1$ as output throughout the interval $[t, t']$ and filters $F_1 = [\textsc{Min}_{\mathcal{F}^*_1}(t,t'), \infty]$ and $F_2 = [0, \textsc{Max}_{\mathcal{F}^*_2}(t,t')]$, which must be feasible due to the assumption that $OPT$ originally assigned filters that lead to no communication, leads to no communication within the considered interval.
\end{proof}

The following lemma generalizes a lemma in \cite{mmm} to \etopk.
Assuming the optimal offline algorithm did not change the set of filters during a time period $[t, t']$, the minimum value observed by nodes in $\mathcal{F}^*_1$ can only be slightly smaller than the maximum value observed by nodes in $\mathcal{F}^*_2$.

\begin{lemma}
\label{le:filterChange}
If $OPT$ uses the same set of filters $F_1, F_2$ during $[t,t']$, 
then it holds $\MIN_{\mathcal{F}^*_1}(t, t') \geq (1-\varepsilon)\ \MAX_{\mathcal{F}^*_2}(t, t')$. 
\end{lemma}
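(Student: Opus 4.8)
The plan is to reduce everything to the two filter intervals used by $OPT$ and to exploit the overlap constraint that any valid set of filters must satisfy (Observation~\ref{le:apxFilterDef}). Concretely, I would first invoke Proposition~\ref{ob:twofilters} to fix the shape of the situation: throughout $[t,t']$ every node in $\mathcal{F}^*_1$ carries the interval $F_1 = [\ell_1,u_1]$ and every node in $\mathcal{F}^*_2$ carries $F_2 = [\ell_2,u_2]$. Since $OPT$ keeps this set of filters unchanged over $[t,t']$, no filter violation (and hence no communication) occurs in this period, so for every $t^* \in [t,t']$ we have $v_i^{t^*} \in F_1$ for all $i \in \mathcal{F}^*_1$ and $v_j^{t^*} \in F_2$ for all $j \in \mathcal{F}^*_2$. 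Taking the minimum over $\mathcal{F}^*_1$ and over all such $t^*$, and the maximum over $\mathcal{F}^*_2$, this gives $\MIN_{\mathcal{F}^*_1}(t,t') \geq \ell_1$ and $\MAX_{\mathcal{F}^*_2}(t,t') \leq u_2$.

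The second and decisive step is to relate $\ell_1$ and $u_2$. Because a filter-based algorithm is required to have a valid set of filters at the end of each protocol, Observation~\ref{le:apxFilterDef}, applied to any pair consisting of a node $i \in \mathcal{F}^*_1 = \mathcal{F}^*(t)$ and a node $j \in \mathcal{F}^*_2 \notin \mathcal{F}^*(t)$, yields $\ell_1 \geq (1-\varepsilon)\,u_2$. Chaining the three inequalities then gives $\MIN_{\mathcal{F}^*_1}(t,t') \geq \ell_1 \geq (1-\varepsilon)\,u_2 \geq (1-\varepsilon)\,\MAX_{\mathcal{F}^*_2}(t,t')$, which is exactly the claimed bound.

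I do not expect a real obstacle here; the proof is essentially a sandwich argument. The two points that need a little care are (i) justifying that ``$OPT$ uses the same set of filters during $[t,t']$'' indeed forces all observed values to stay inside the respective intervals over the whole period — this uses that otherwise a filter violation, and thus communication, would have to occur — and (ii) correctly importing the overlap inequality $\ell_1 \geq (1-\varepsilon)u_2$ from Observation~\ref{le:apxFilterDef}, which is where the approximation parameter $\varepsilon$ enters and where this statement differs from its exact-case analogue in \cite{mmm}. Degenerate cases such as $\mathcal{F}^*_1 = \emptyset$ or $\mathcal{F}^*_2 = \emptyset$ (i.e.\ $k=0$ or $k=n$) make the inequality vacuous and can be dismissed up front.
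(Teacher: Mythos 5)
Your proof is correct and rests on exactly the same two ingredients as the paper's: unchanged filters force all observed values to stay inside $F_1$ and $F_2$ over $[t,t']$, and Observation~\ref{le:apxFilterDef} supplies $\ell_1 \geq (1-\varepsilon)u_2$. The paper merely packages this as a proof by contradiction via the two witness nodes attaining $\MIN_{\mathcal{F}^*_1}(t,t')$ and $\MAX_{\mathcal{F}^*_2}(t,t')$, whereas you chain the inequalities directly; the content is the same.
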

\begin{proof}
Assume to the contrary that $OPT$ uses the same set of filters throughout the interval $[t, t']$ and outputs $\mathcal{F}^*_1$, but $\textsc{Min}_{\mathcal{F}^*_1}(t,t') < (1-\varepsilon) \textsc{Max}_{\mathcal{F}^*_2}(t,t')$ holds. 
Then there are two nodes, $i \in \mathcal{F}^*_1$ and $j \notin \mathcal{F}^*_1$, and two times $t_1, t_2 \in [t, t']$, such that $v_i^{t_1} = \textsc{Min}_{\mathcal{F}^*_1}(t,t')$ and $v_j^{t_2} = \textsc{Max}_{\mathcal{F}^*_2}(t,t')$.
Due to the definition of a set of filters and the fact that $OPT$ has not communicated during $[t, t']$, $OPT$ must have set the filter for node $i$ to $[s_1, \infty]$, $s_1 \leq v_i^{t_1}$, and for node $j$ to $[-\infty, s_2]$, $s_2 \geq v_j^{t_2}$.
This is a contradiction to the definition of a set of filters and Observation~\ref{le:apxFilterDef}.
\end{proof}
 
At last a result from \cite{mmm} is restated in order to calculate the (exact) top-$k$ set for one time step.
\begin{lemma}\cite{mmm}
\label{le:topk}
There is an algorithm that computes the node holding the largest value using $\bigO(\log n)$ messages on expectation.
\end{lemma}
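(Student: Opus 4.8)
The plan is to use a randomized "doubling + sampling" approach to identify the node holding the maximum value in expected $\bigO(\log n)$ messages. First I would have the server broadcast a request, and observe that naively every node answers, giving $\Theta(n)$ messages. To beat this, the key idea is to maintain a threshold $\tau$ and only let nodes with value exceeding $\tau$ respond, while adjusting $\tau$ so that in expectation only a constant number of nodes respond per round. Concretely, I would run in phases: in each phase the server holds a current best candidate value $c$ (initially $c = -\infty$, or $0$), broadcasts $c$, and asks each node with $v_i > c$ to report independently with a probability $p$ chosen so that the expected number of reporters is constant. Using the reported values the server raises $c$ to the largest value it has seen; when a broadcast with the current $c$ elicits no response at all, the server concludes $c$ is the maximum and halts. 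Since values are distinct (ties broken by identifiers, as assumed in the model section), the maximum is well defined and the protocol terminates correctly.

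The main work is bounding the expected number of rounds and messages. I would argue that the set of nodes "still in contention" (those with value $> c$) shrinks geometrically in expectation: if there are currently $m$ candidates above $c$ and each reports with probability $\Theta(1/\,?\,)$ — here one has to be a little careful, since the server does not know $m$ — one fixes this by a standard trick of trying report probabilities $p = 1, 1/2, 1/4, \ldots$ in successive sub-rounds until a response is obtained, so that the effective sampling rate self-tunes to $\Theta(1/m)$. Each such sub-round costs $\bigO(1)$ messages in expectation (one broadcast plus a constant expected number of reporters once $p \approx 1/m$), and succeeds with constant probability, so the expected cost to get one "good" reporter is $\bigO(1)$. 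A good reporter, being a uniformly random element among the current candidates above $c$, has rank that in expectation halves the number of remaining candidates (its value is a random order statistic, so the count of nodes still strictly above it is at most $m/2$ in expectation, and concentrates enough that after $\bigO(\log n)$ good reporters the candidate set is empty with high probability). Multiplying $\bigO(\log n)$ good reporters by $\bigO(1)$ expected messages each, plus the final empty-response round, yields the claimed $\bigO(\log n)$ expected messages.

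The step I expect to be the main obstacle is handling the server's ignorance of $m$ cleanly while keeping the message count $\bigO(1)$ per phase: one must ensure the geometric search over $p = 2^{-j}$ does not itself blow up the count (it does not, because the expected number of probes before hitting $p = \Theta(1/m)$ is $\bigO(\log m)$ probes but each failed probe with $p$ too small costs only the single broadcast, and each probe with $p$ too large costs $\bigO(m p)$ reporters — so one should instead search $p$ downward from $1$, paying $\bigO(1)$ per probe, stopping at the first probe with at most a constant number of responders). The remaining bookkeeping — verifying that the number of order statistics needed to exhaust $\{1,\ldots,n\}$ is $\bigO(\log n)$ in expectation and, more to the point, that the total message count is $\bigO(\log n)$ in expectation rather than merely $\bigO(\log^2 n)$ — is routine once the per-phase cost is pinned to $\bigO(1)$; a clean way is to charge each eliminated candidate $\bigO(1)$ amortized messages and note that an expected constant fraction of the surviving candidates is eliminated per good reporter. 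This matches the statement as quoted from~\cite{mmm}, so I would simply cite that construction and reproduce this argument at the level of detail above.
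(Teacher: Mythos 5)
First, note that the paper does not prove this lemma at all: it is imported verbatim from~\cite{mmm} and used as a black box, so there is no in-paper proof to compare against. Your reconstruction follows what is essentially the standard (and, as far as the cited construction goes, the intended) argument: repeatedly sample a contender among the nodes still above the current candidate value $c$, broadcast its value, and argue that each such sample halves the contender set in expectation, giving $\bigO(\log n)$ phases at $\bigO(1)$ expected messages each. The halving argument and the ``multiply expected number of phases by conditional per-phase cost'' step are sound (the latter needs the observation that the event ``phase $j$ occurs'' is measurable with respect to the history before phase $j$, so $\sum_j \Pr[T>j]\cdot\bigO(1) = \bigO(\log n)$; your amortized-charging phrasing is vaguer than necessary but the correct argument is available).

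The one genuine flaw is the mechanism you propose for self-tuning the report probability: you say to search $p$ \emph{downward} from $1$, ``stopping at the first probe with at most a constant number of responders.'' As literally described, the very first probe at $p=1$ makes all $m$ current contenders respond, costing $\Theta(m)$ messages (so $\Theta(n)$ in the first phase) before you can observe that there were too many responders; the nodes respond independently and cannot retroactively suppress their messages. You in fact identify the right asymmetry yourself --- a probe with $p$ too small costs almost nothing, a probe with $p$ too large costs $\Theta(mp)$ --- but then draw the reversed conclusion. The correct schedule is the \emph{upward} geometric search $p_r = 2^r/n$ for $r=0,1,\ldots,\log n$, stopping at the first round in which someone responds; this is exactly the paper's own \textsc{ExistenceProtocol} (Lemma~\ref{le:twoBuckets}), whose integral computation shows the expected number of messages until the first response is $\bigO(1)$. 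With that substitution (and noting that the first responder, or the maximum among the responders of the successful round, is by symmetry at least as good as a uniform sample for the halving argument), your proof goes through and matches the cited construction.
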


\section{Auxiliary Problem: Existence}
\label{sec:existence}
In our competitive algorithms designed and analyzed in the following, we will frequently make use of a protocol for a subproblem which we call \textsc{Existence}: 
Assume all nodes observe only binary values, i.e.\ $\forall i \in \{1, \ldots, n\} : v_i \in \{0, 1\}$.
The server is asked to decide the \textit{logical disjunction} for one fixed time step $t$. 

It is known that for $n$ nodes each holding a bit vector of length $m$ the communication complexity to decide the bit-wise disjunction is $\Omega(nm)$ in the server model \cite{phillips}. 
Observe that in our model $1$ message is sufficient to decide the problem assuming the nodes have a unique identifier between $1$ and $n$ and the protocol uses $n$ rounds. 

We prove that it is sufficient to use a constant amount of messages on expectation and logarithmic number of rounds. 
Note that the algorithm in the following lemma is a Las Vegas algorithm, i.e.\ the algorithm is always correct and the number of messages needed is based on a random process.

\begin{lemma} 
	\label{le:twoBuckets}
	There is an algorithm \textsc{ExistenceProtocol} that uses $\bigO(1)$ messages on expectation to solve the problem \textsc{Existence}.
\end{lemma}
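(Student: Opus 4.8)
The plan is to solve \textsc{Existence} by a randomized reduction to a "one-survivor" situation, amplified over $\LDAUOmicron{\log n}$ rounds. First I would have the server maintain a candidate set $S$ of nodes that might hold a $1$; initially $S = \{1, \ldots, n\}$, but the server does not know $S$ explicitly — rather, each node knows whether it is still active. Begin with every node that observes a $1$ being active. In each round, each active node flips an independent fair coin and stays active only if it flips heads (say). After the flips, we need the server to learn whether \emph{exactly one} node is active, at least one, or zero; if exactly one is active, that node simply sends its identifier (one message) and the server outputs $1$ and terminates. If zero are active \emph{and} there were active nodes before this round, the round is "wasted" and we must detect this and retry from the previous candidate set — which is delicate because nodes do not remember a canonical previous set across a coin that already deactivated them. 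To avoid that, a cleaner variant is: in each round, active nodes survive with probability $1/2$, but before flipping, the round first runs a cheap "is the current active set empty?'' and a "is it a singleton?'' sub-check; only if it is neither do we subsample. I expect the main obstacle to be exactly this: detecting the three cases (empty / singleton / $\geq 2$) cheaply and without destroying the $\LDAUOmicron{1}$ expected-message budget.

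The key sub-routine I would use to resolve that obstacle is a constant-expected-cost \emph{collision test}: have each active node broadcast-contend on the channel (or, in this model, send to the server) with probability $p$ chosen adaptively; using a doubling estimate of $|S|$, $\LDAUOmicron{1}$ expected messages suffice to decide whether $|S|=0$, $|S|=1$, or $|S|\ge 2$ — this is essentially the standard "approximate counting / selection'' primitive and is the same flavour as Lemma~\ref{le:topk}. Concretely: guess $|S| \approx 2^j$ for $j = 0, 1, 2, \ldots$; for each guess let each active node send with probability $2^{-j}$; if we see $0$ messages the guess was (probably) too large so go to $j-1$, if we see $\ge 2$ it was too small so go to $j+1$, and the expected number of messages per correct-scale guess is $\LDAUOmicron{1}$, with the geometric tail over wrong guesses also summing to $\LDAUOmicron{1}$. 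Once we are at the right scale and see exactly one message, we either have a true singleton (done, output $1$) or we subsample the active set by a factor $2$ and recurse; the number of such halvings is $\LDAUOmicron{\log n}$ rounds, matching the allowed polylogarithmic round budget, while the expected messages \emph{per halving level} form a geometric-type series that sums to $\LDAUOmicron{1}$ overall.

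For correctness (the Las Vegas property): the server outputs $1$ iff at some round it receives a message from a node that truthfully reports $v_i = 1$, so it never outputs $1$ incorrectly; and it outputs $0$ only after a round in which the collision test certifies $|S| = 0$, which happens only if no node observes a $1$ — so the algorithm is always correct, and only the message count is random. For the $\LDAUOmicron{1}$ expected-message bound I would argue level by level: let $X_\ell$ be the number of messages spent at halving-level $\ell$ (running the collision test at that level); conditioned on reaching level $\ell$, $\E[X_\ell] = \LDAUOmicron{1}$, and the probability of reaching level $\ell$ at all decays geometrically in $\ell$ once the active set has shrunk below a constant — more carefully, one shows $\Pr[\text{active set nonempty after } \ell \text{ halvings}] \le \min\{1, 2^{-\ell}\, \E[|S_0|]\}$ is not quite what we want (the set could stay a singleton for a while), so the right potential is: a singleton is resolved in the \emph{next} round with probability $\ge 1/2$, giving a geometric number of extra rounds with $\LDAUOmicron{1}$ expectation. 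Summing the two contributions — the "shrink from $n$ to $1$'' phase and the "resolve the last singleton'' phase — yields $\E[\text{messages}] = \LDAUOmicron{1}$, completing the proof.
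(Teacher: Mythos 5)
Your core primitive --- having the $1$-nodes send with a geometrically varying probability until the server hears something --- is essentially the paper's entire protocol, but as written your proposal has a genuine gap at exactly the point that matters. The direction of your probability sweep is backwards (or at best ambiguous): you enumerate guesses $j = 0, 1, 2, \ldots$ with send probability $2^{-j}$, so the very first step has every active node transmitting with probability $1$, which costs $b$ messages when $b$ nodes hold a $1$ --- precisely the cost the lemma must avoid. The whole point is that one must start at probability $1/n$ and \emph{increase} (the paper uses $p_r = 2^r/n$ for $r = 0, \ldots, \log n$), terminating at the first message; the expected cost is then $\sum_r b\,p_r \prod_{k<r}(1-p_k)^b$, and bounding this sum by a constant (the paper does so via an explicit integral estimate) is the actual content of the proof. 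Your write-up never carries out this computation: the assertion that ``the geometric tail over wrong guesses also sums to $\bigO{1}$'' is the claim that needs proving, since before the send probability reaches the scale $1/b$ you pay $b\,p_r$ expected messages in each round $r$, weighted by the probability that no earlier round produced a message.

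Separately, most of your machinery is dead code for this problem. \textsc{Existence} asks only for the disjunction, so the server may output $1$ the instant \emph{any} message arrives; there is never a reason to distinguish ``singleton'' from ``at least two,'' to subsample the active set by coin flips, or to resolve a last survivor --- that apparatus belongs to harder primitives such as computing the maximum (Lemma~\ref{le:topk}). This is not just a matter of elegance: the halving-level analysis is where your own text concedes the decay argument ``is not quite what we want,'' and the multi-level structure is what forces you into those unproven claims. Strip the protocol down to: deactivate all $0$-nodes; in round $r$ each active node sends independently with probability $2^r/n$; stop at the first message (output $1$) or after round $\log n$ with silence (output $0$). Then prove the constant bound on the expected number of messages, and you have the lemma.
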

\begin{proof}
Initially all nodes are active.
All nodes $i$ deactivate themselves, if $v_i = 0$ holds, that is, these nodes do not take part in the following process.
In each round $r = 0, 1, \ldots, \log n$  the active nodes send messages independently at random with probability $p_r \coloneqq 2^r / n$.
Consequently, if the last round $\gamma = \log n$ is reached, all active nodes $i$ with $v_i = 1$ send a message with probability 1.
As soon as at least one message was sent or the $\gamma$-th round ends, the protocol is terminated and the server can decide \textsc{Existence}.

Next, we analyze the above protocol and show that the bound on the expected number of messages is fulfilled. 
Let $X$ be the random variable for the number of messages used by the protocol and $b$ be the number of nodes $i$ with $v_i = 1$.
Note that the expected number of messages sent in round $r$ is $b\! \cdot \! p_r$ and the probability that no node has sent a message before is $\prod_{k = 0}^{r-1} \left( 1- p_k \right)^b$.

Observing that the function $f(r) = b\cdot p_r \left( 1- p_{r-1} \right)^b$ has only one extreme point and $0\leq f(r) < 2$ for $r \in [0,\log n]$, it is easy to verify that the series can be upper bounded by simple integration: 

\ifConferenceVersion
 $\mathbb{E}[X] 
 \leq \frac{b}{n} + \sum_{r = 1}^{\gamma} b\cdot p_r \prod_{k = 0}^{r-1} \left( 1- p_k \right)^b $
 $\leq 1 + \sum_{r = 1}^{\gamma} b\cdot p_r \left( 1- p_{r-1} \right)^b $
 $\leq 6$
\else
\begin{align*} 
	\mathbb{E}[X] &\leq \frac{b}{n} + \sum_{r = 1}^{\log(n)} \frac{b 2^r}{n} \prod_{k = 0}^{r-1} \left( 1- \frac{2^{k}}{n} \right)^b \\
	&\leq 1 + \sum_{r = 1}^{\log(n)} \frac{b 2^r}{n} \left( 1- \frac{2^{r-1}}{n} \right)^b \\
	&\leq 1 + \int_{0}^{\log(n)} \frac{b 2^r}{n} \left( 1- \frac{2^{r-1}}{n} \right)^b dr +2
\end{align*}
\begin{align*}
	\leq& 3 + \left[ \frac{b}{(b+1) n \ln(2)} (2^r-2n) \left( 1- \frac{2^{r-1}}{n} \right)^b \right]_0^{\log n} \\
	\leq& 3 +  \frac{1}{n \ln(2)} \cdot \\ 
	&\left( (2^{\log n} - 2n) \left( 1- \frac{2^{\log n - 1}}{n} \right)^b + 2n\left( 1- \frac{2^{0 - 1}}{n} \right)^b \right)\\
	\leq& 3 +  \frac{1}{n \ln(2)} \left[ (n - 2n) \left( 1- \frac{1}{2} \right)^b + 2n \left( 1- \frac{1}{2 n} \right)^b \right]\\
	\leq& 3 +  \frac{1}{n \ln(2)} \left[ (- n)  \frac{1}{2^b}  + 2n \left( 1- \frac{1}{2 n} \right)^b \right]\\
	\leq& 3 + \frac{1}{\ln (2)} \left( 2 \left( 1- \frac{1}{2 n} \right)^b - \frac{1}{2^b} \right) \\
	\leq& 3 + \frac{1}{\ln (2)} \left( 2 - \frac{1}{2^b} \right) \leq 3 + \frac{2}{\ln(2)} \leq 6 \enspace . 
\end{align*}
\fi
\end{proof}
	
This protocol can be used for a variety of subtasks, e.g.\ validating that all nodes are within their filters, identifying that there is some filter-violation or whether there are nodes that have a higher value than a certain threshold. 
\begin{corollary}
	\label{co:validate}
	Given a time $t$.
	There is an algorithm which decides whether there are nodes which observed a filter-violation using $\bigO(1)$ messages on expectation.
\end{corollary}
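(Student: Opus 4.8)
The plan is to reduce the task to the problem \textsc{Existence} and then invoke Lemma~\ref{le:twoBuckets} directly. At time $t$ each node $i$ holds its assigned filter $F_i = [\ell_i, u_i]$ together with its freshly observed value $v_i^t$, and it can check locally—without any communication—whether $v_i^t \notin F_i$, i.e.\ whether it violates its filter from below ($v_i^t > u_i$) or from above ($v_i^t < \ell_i$). Define the binary value $b_i \coloneqq 1$ if node $i$ observed a filter-violation at time $t$ and $b_i \coloneqq 0$ otherwise. Then ``there exists a node that observed a filter-violation'' is precisely the logical disjunction $\bigvee_{i=1}^n b_i$, which is exactly what an instance of \textsc{Existence} on the inputs $b_1,\dots,b_n$ asks the server to decide.

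I would therefore have the nodes run \textsc{ExistenceProtocol} of Lemma~\ref{le:twoBuckets} on the locally computed bits $b_i$. By that lemma the protocol is a Las Vegas algorithm that always lets the server correctly decide the disjunction and uses $\bigO(1)$ messages in expectation (and a logarithmic number of rounds, which is within the budget allowed by the model). Since the reduction—each node forming $b_i$ from its own filter and value—requires no communication at all, the total expected message complexity of the resulting algorithm is still $\bigO(1)$, which proves the claim.

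The only thing worth spelling out is that a single filter-violation, from above or from below, suffices to set $b_i = 1$, so one invocation of the protocol covers both directions of violation simultaneously; there is no genuine obstacle here, the statement is an immediate corollary of Lemma~\ref{le:twoBuckets}. (If one also wants the server to learn \emph{which} nodes violated, or their values, that would require additional communication; but for merely \emph{deciding whether} a violation occurred, the disjunction is all that is needed.)
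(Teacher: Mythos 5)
Your proposal is correct and follows essentially the same route as the paper: each node locally sets its bit to $1$ exactly when it observes a filter-violation (in either direction) and the nodes run the \textsc{ExistenceProtocol} of Lemma~\ref{le:twoBuckets} on these bits, giving $\bigO(1)$ messages on expectation. The additional remarks about the reduction costing no communication and about both violation directions being covered are accurate but not substantively different from the paper's argument.
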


\begin{proof}

For the distributed nodes to report filter-violations we use an approach based on the \textsc{ExistenceProtocol} to reduce the number of messages sent in case several nodes observe filter-violations at the same time.
The nodes apply the \textsc{ExistenceProtocol} as follows: 
Each node that is still within its filter applies the protocol using a $0$ as its value and each node that observes a filter-violation uses a $1$. 
Note that by this approach the server definitely gets informed if there is some filter-violation and otherwise no communication takes place. 
\end{proof}

The \textsc{ExistenceProtocol} can be used in combination with the relaxed definition of filters to strengthen the result for Top-$k$-Position Monitoring from $\bigO(k \log n + \log \Delta \log n)$ to $\bigO(k \log n + \log \Delta)$.
We first introduce a generic framework and then show how to achieve this bound.

\paragraph*{A generic approach} 
Throughout the paper, several of our algorithms feature similar structural properties in the sense that they can be defined within a common framework. 
Hence, we now define a generic approach to describe the calculation and communication of filters, which we then refine later.
The general idea is to only use two different filters that are basically defined by one value separating nodes in $\mathcal{F}(t)$ from the remaining nodes.
Whenever a filter-violation is reported, this value is recalculated and used to set filters properly. 
 
The approach proceeds in rounds. 
In the first round we define an initial interval $L_0$. 
In the $r$-th round, based on interval $L_r$, we compute a value $m$ that is broadcasted and is used to set the filters to $[0, m]$ and $[m, \infty]$. 
As soon as node $i$ reports a filter-violation observing the value $v_i$, the coordinator redefines the interval $L_{r+1} \coloneqq L_r \cap [0, v_i]$ if the violation is from above and $L_{r+1} \coloneqq L_r \cap [v_i, \infty]$ otherwise.
The approach finishes as soon as some (predefined) condition is satisfied.

\begin{corollary}
	\label{th:Midpoint}
	There is an algorithm that is $\bigO(k\log n + \log \Delta)$-competitive for (exact) Top-$k$-Position Monitoring.
\end{corollary}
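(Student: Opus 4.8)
The plan is to instantiate the generic round-based approach described above with a binary search for a value $m$ separating the current top-$k$ nodes from the rest, to resolve filter-violations with the $\bigO(1)$-message protocol of Corollary~\ref{co:validate}, and to invoke the max-finding protocol of Lemma~\ref{le:topk} only when the composition of the top-$k$ set has provably changed. At a (re-)initialization the server determines the exact top-$k$ set together with the $k$-th and $(k+1)$-st largest values by running the protocol of Lemma~\ref{le:topk} $\bigO(k)$ times --- each time broadcasting the already identified nodes so that they abstain --- at a total cost of $\bigO(k\log n)$ messages. It then sets $L$ to the interval between the $(k+1)$-st and the $k$-th largest current value, broadcasts the midpoint $m$ of $L$, and assigns the filter $[m,\infty]$ to the $k$ current top-$k$ nodes and $[0,m]$ to the remaining ones; this is a valid set of filters by Observation~\ref{le:apxFilterDef} with $\varepsilon=0$.

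As long as no node violates its filter the output need not change. When a violation occurs the server learns, with $\bigO(1)$ expected messages by running the protocol of Lemma~\ref{le:twoBuckets} once for each of the two possible directions (the responding node also reporting its observed value), that there is a violation, of which kind, and a representative violating node together with its value. Suppose it is a violation ``from below'' by a node $j$ outside the current output set holding value $v_j$. The server runs one more instance of the protocol of Lemma~\ref{le:twoBuckets} among the current top-$k$ nodes to test whether any of them now observes a value smaller than $v_j$. If none does, the current output set is still the true top-$k$ set, and it suffices to raise the separating value: the server narrows $L$ to $L\cap[v_j,\infty]$, broadcasts the new midpoint and re-assigns the two filters; ``from above'' violations are handled symmetrically. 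If, however, the test exhibits an inconsistency --- a node inside the output set and one outside it whose values are in the wrong order, impossible for a valid top-$k$ set --- or if the narrowing would empty $L$, the server performs a full re-initialization. Since invalid filters are allowed within a time step, the per-step protocol simply iterates the narrowing until the broadcast $m$ induces no violation, which takes $\bigO(\log\Delta)$ rounds and leaves a valid set of filters at the end of the step.

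Correctness is immediate from this last point; the interesting part is the cost, which I would charge to the adversary's optimum $OPT$. Call the maximal time periods between consecutive full re-initializations \emph{phases}. Within a phase the true top-$k$ set is constant (any change is, one checks, immediately reported --- either some outside node exceeds $m$ or some inside node drops below it --- and triggers a re-initialization), and $L$ is only narrowed, starting from $|L|\le\Delta$, each narrowing at least halving $|L|$; hence a phase contains $\bigO(\log\Delta)$ narrowing steps, each costing $\bigO(1)$ expected messages, on top of the $\bigO(k\log n)$ of its own re-initialization. It remains to bound the number of phases by $OPT+1$: each phase boundary other than the first is caused by a detected inconsistency or by $L$ running empty; the former means the true top-$k$ set changed during the just-ended phase, the latter means no single threshold is valid throughout that phase so, by Proposition~\ref{ob:twofilters} together with the $\varepsilon=0$ case of Lemma~\ref{le:filterChange}, $OPT$ cannot have kept a fixed pair of filters over the phase; in either case $OPT$ communicated inside the phase. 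As phases are disjoint, the online algorithm uses $\bigO((OPT+1)(k\log n+\log\Delta))$ messages in expectation, which gives the claimed competitiveness.

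The step I expect to be the main obstacle is the correctness-versus-cost dichotomy in resolving a single violation: one must decide, spending only $\bigO(1)$ expected messages, whether the event is merely a bad placement of $m$ --- repaired by one more binary-search step --- or a genuine change of the top-$k$ composition forcing the expensive $\bigO(k\log n)$ recomputation, and this must be set up so that every expensive recomputation can be pinned to a distinct communication of $OPT$ while the cheap steps never corrupt the output, including the bookkeeping needed when several filters are violated at once. By contrast, the improvement over the $\bigO(k\log n+\log\Delta\log n)$ bound of~\cite{mmm} is conceptually immediate: there each of the $\bigO(\log\Delta)$ binary-search steps re-validated the filters with $\bigO(\log n)$ messages, whereas Corollary~\ref{co:validate} does this with only $\bigO(1)$ expected messages, removing one logarithmic factor.
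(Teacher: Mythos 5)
Your proposal is correct and follows essentially the same route as the paper: instantiate the generic framework with the midpoint of $L$, pay $\bigO(k\log n)$ per re-initialization and $\bigO(1)$ expected messages per halving of $L$ via the \textsc{ExistenceProtocol}, and charge each phase to one communication of $OPT$ (via Proposition~\ref{ob:twofilters} and Lemma~\ref{le:filterChange} with $\varepsilon=0$). The only, harmless, difference is your explicit consistency check for a changed top-$k$ set; the paper's version detects such changes implicitly, since the alternating narrowings triggered by the resulting violations empty $L$ within the same time step.
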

\ifConferenceVersion
\else
\begin{proof} 
  Our algorithm proceeds in phases that are designed such that we can show that an optimal algorithm needs to communicate at least once during a phase and additionally, we can upper bound the number of messages sent by the online algorithm according to the bound on the competitiveness. 
  
  We apply the generic approach with parameters described as follows.
  The initial interval is defined as $L_0 \coloneqq [\ell, u]$, where $\ell = v_{\pi(k+1, t)}^t$, $u = v_{\pi(k, t)}^t$.
  This can be done by determining the values of the nodes holding the $k+1$ largest values using $\bigO(k \log n)$ messages on expectation.
  In the $r$-th round, based on interval $L_r$, we compute the midpoint of $L_r$ as the value $m$ which is broadcasted and used to set the filters. 
  As soon as a filter-violation is reported, the generic framework is applied.
  In case $L_r$ is empty the phase ends.
 
  Note that the distance between $u$ and $\ell$ gets halved every time a node violates its filter leading to $\bigO(\log (u_0 - \ell_0)) = \bigO(\log \Delta)$ messages on expectation per phase.
  Also, it is not hard to see that during a phase $OPT$ has communicated at least once and hence, we obtain the claimed bound on the competitiveness.
\end{proof}
\fi

\section{Competing against an Exact Adversary}
\label{sec:scatteredData}
In this section, we propose an algorithm based on the strategy to choose the nodes holding the k largest values as an output and use this set as long as it is feasible.
It will turn out that this algorithm is suitable in two scenarios: First, it performs well against an adversary who solves the Top-$k$-Position Monitoring problem (cf.\ Theorem~\ref{th:scatteredCompetitive}); second, we can use it in situations in which an adversary who is allowed to introduce some error and cannot exploit this error because the observed data leads to a unique output (cf.\ Sect.~\ref{sec:denseData}).

In particular, we develop an algorithm started at $t$ that computes the output set $\mathcal{F}_1 \coloneqq \mathcal{F}(t)$ using the protocol from Lemma~\ref{le:topk} and for all consecutive times witnesses whether $\mathcal{F}_1$ is correct or not.
Recall that while computing the set $\mathcal{F}(t)$ from scratch (cf.\ Lemma~\ref{le:topk}) is expensive in terms of communication, witnessing its correctness in consecutive rounds is cheap since it suffices to observe filter-violations (cf.\ Definition~\ref{def:filters} and Corollary~\ref{co:validate}).

The algorithm tries to find a value $m$ which partitions $\mathcal{F}_1$ from $\mathcal{F}_2$ according to the generic framework, such that for all nodes $i \in \mathcal{F}_1$ it holds $v_i \geq m$ and for all nodes $i \in \mathcal{F}_2$ it holds $v_i \leq m$.
We call such a value $m$ \emph{certificate}.

\paragraph*{Guessing OPT's Filters}
In the following we consider a time period $[t, t'']$ during which the output $\mathcal{F}(t)$  need not change.
Consider a time $t' \in [t, t'']$.
The online strategy to choose a certificate at this time contingents on the size of some interval 
\begin{center}
$L^*$ from which an offline algorithm must have chosen \\ the lower bound $\ell^*$ of the upper filter at time $t$
\end{center} such that the filters are valid throughout $[t, t']$.
The algorithm \Scattered keeps track of (an approximation of) $L^*$ at time $t'$ denoted by $L = [\ell, u]$ for which $L^* \subseteq L$ holds.
The online algorithm tries to improve the guess where OPT must have set filters by gradually reducing the size of interval $L$ (while maintaining the invariant $L^* \subseteq L$) at times it observes filter-violations.

Initially $u$ and $\ell$ are defined as follows: $u \coloneqq v_{\pi(k, t)}^t = \MIN_{\mathcal{F}_1}(t, t)$ and $\ell \coloneqq v_{\pi(k+1, t)}^t = \MAX_{\mathcal{F}_2}(t, t)$ and are redefined over time.
Although defining the certificate as the midpoint of $L = [\ell, u]$ intuitively seems to be the best way to choose $m$, the algorithm is based on four consecutive phases, each defining a different strategy.

In detail, the first phase is executed as long as the property 
\begin{equation}
\tag{P1}
\log \log u > \log \log \ell + 1
\end{equation}
holds. 
In this phase, $m$ is defined as $\ell + 2^{2^r}$ after $r$ filter-violations observed. 
If the property 
\begin{equation}
\tag{P2}
\log \log u \leq \log \log \ell +1 \wedge u > 4 \ell
\end{equation}
holds, the value $m$ is chosen to be $2^{mid}$ where $mid$ is the midpoint of $[\log \ell, \log u]$. 
Observe that $2^{mid} \in L = [\ell, u]$ holds.

The third phase is executed if property 
\begin{equation}
\tag{P3}
u \leq 4 \ \ell \wedge u > \frac{1}{1-\varepsilon}\ \ell
\end{equation}
holds and employs the intuitive approach of choosing $m$ as the midpoint of $L$.
The last phase contains the remaining case of 
\begin{equation}
\tag{P4}
u \leq \frac{1}{1-\varepsilon}\ell
\end{equation}
and is simply executed until the next filter-violation is observed using the filters $F_1 = [\ell, \infty]$ and $F_2 = [0, u]$.

In the following we propose three algorithms $\mathcal{A}_1, \mathcal{A}_2,$ and $\mathcal{A}_3$ which are executed if the respective property hold and analyze  the correctness and the amount of messages needed. 

\begin{lemma}
	\label{le:Phase1}
	Given time $t$, an output $\mathcal{F}(t)$, and an interval $L = [\ell, u]$ for which (P1) holds, there is an algorithm $\mathcal{A}_1$ that witnesses the correctness of $\mathcal{F}(t)$ until a time $t'$ at which it outputs $L' = [\ell', u']$ for which (P1) does not hold.
	The algorithm uses $\bigO(\log \log \Delta)$ messages on expectation. 
\end{lemma}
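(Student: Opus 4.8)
The plan is to run the generic framework of Section~\ref{sec:existence} with the initial interval $L_0 = [\ell,u]$, certificate rule $m = \ell + 2^{2^r}$ after $r$ observed filter-violations, filters $[0,m]$ and $[m,\infty]$, and termination as soon as property (P1) fails (i.e.\ as soon as the current interval $L_r = [\ell_r,u_r]$ satisfies $\log\log u_r \le \log\log \ell_r + 1$), or as soon as the output $\mathcal{F}(t)$ genuinely must change. First I would argue \emph{correctness}: by Corollary~\ref{co:validate}, between consecutive time steps the server learns (using $\bigO(1)$ expected messages) whether any node violated its currently assigned filter; as long as no violation is reported, $[0,m]$ and $[m,\infty]$ form a valid set of filters separating $\mathcal{F}_1$ from $\mathcal{F}_2$, so $\mathcal{F}(t)$ is indeed still correct. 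When a violation is reported, the generic framework shrinks $L$ by intersecting with $[0,v_i]$ or $[v_i,\infty]$; I would note the invariant $L^* \subseteq L$ is maintained exactly as in the ``Guessing OPT's Filters'' discussion, so once $L$ has shrunk to a point (or $\mathcal{F}(t)$ provably changes) the algorithm can either hand off a valid smaller interval to the next phase or detect that the output changed.

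Next I would bound the \emph{number of filter-violations} before (P1) fails. The key quantitative observation is that the certificate $m = \ell + 2^{2^r}$ grows doubly-exponentially in $r$. After the $r$-th violation: if it was a violation from above (some node in what we thought was $\mathcal{F}_2$ exceeded $m$, or rather a node fell below $m$ — I would pin down the direction carefully against Observation~\ref{le:apxFilterDef}), then $u$ is updated to at most $m = \ell + 2^{2^{r-1}}$; if from below, $\ell$ is updated to at least $m$. In either case, after $r = \bigO(\log\log(u-\ell)) = \bigO(\log\log\Delta)$ violations, the gap between the surviving endpoints has been squeezed from both sides enough that $\log\log u \le \log\log\ell + 1$ must hold, because $u$ is then within an additive $2^{2^{r}}$-sized doubly-exponential band of $\ell$. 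I would make this precise by observing that $m_r - \ell = 2^{2^r}$ doubles the exponent each step, so it takes only $\log_2\log_2 \Delta + O(1)$ steps for the certificate to sweep past $u$ entirely, and each violation strictly advances the sweep.

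Finally I would assemble the message count: $\bigO(\log\log\Delta)$ violations, each of which costs $\bigO(1)$ expected messages to detect via Corollary~\ref{co:validate} plus $\bigO(1)$ messages to broadcast the new certificate $m$ and receive the violating node's value; summing gives $\bigO(\log\log\Delta)$ messages in expectation. The main obstacle I anticipate is the bookkeeping in the violation-count bound: I must show that \emph{every} filter-violation (not just those from one side) makes definite doubly-exponential progress toward (P1) failing, and in particular that a from-above violation followed by a from-below violation cannot stall the process. Getting the direction conventions and the update rules $L_{r+1} = L_r \cap [0,v_i]$ vs.\ $L_r \cap [v_i,\infty]$ aligned with the certificate formula $\ell + 2^{2^r}$ — and verifying that the exponent $r$ should be indexed by the total violation count rather than reset — is the delicate part; the arithmetic of $\log\log$ itself is routine once that is nailed down.
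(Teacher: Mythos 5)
Your algorithm, certificate rule, and counting strategy coincide with the paper's; the difference is that the one step you explicitly defer --- showing that a violation from above cannot stall the process --- is precisely the crux, and as written your argument does not close it. Your claim that ``each violation strictly advances the sweep'' is not literally true: a from-above violation does not push the certificate past $u$, it shrinks $L$ from the other side, and you never rule out an adversarial interleaving. The paper resolves this with a short observation you are missing: a \emph{single} violation from above immediately makes (P1) fail, so the worst case is a pure sequence of from-below violations. Concretely, the certificate is anchored at the initial left endpoint, $m_r = \ell_0 + 2^{2^r}$ with $r$ the total violation count (on both of these bookkeeping points your suspicion matches the paper's choice). After $r$ from-below violations one has $\ell' \ge m_{r-1} = \ell_0 + 2^{2^{r-1}}$, and a subsequent from-above violation forces $u' < m_r = \ell_0 + 2^{2^r} \le \left(\ell_0 + 2^{2^{r-1}}\right)^2 \le (\ell')^2$, whence $\log\log u' \le \log\log \ell' + 1$ and the phase terminates on the spot. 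With that in hand the bound is exactly your computation: at most $\log\log(u_0-\ell_0)+1 \le \log\log\Delta + 1$ from-below violations before $m_r \ge u_0$, plus at most one terminal from-above violation, each round costing $\bigO(1)$ expected messages via the \textsc{ExistenceProtocol} and one broadcast.
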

\begin{proof}
	The algorithm $\mathcal{A}_1$ applies the generic framework and
	defines the value $m$, the server broadcasts, as $m \coloneqq \ell_0 + 2^{2^r}$, where $\ell_0$ is the initial value of $\ell$.
	If $\log \log u' - \log \log \ell' \leq 1$ holds, the algorithm terminates and outputs $L'  = [\ell', u']$ with $\ell'$ and $u'$ defined as the redefinition of $\ell$ and $u$ respectively.
	
	To analyze the amount of messages needed and express it in terms of $\Delta$, observe that in the worst case the server only observes filter-violations from nodes $i \in \mathcal{F}_2$. 
	In case there is a filter-violation from above, i.e.\ a node $i \in \mathcal{F}_1$ reports a filter-violation, the condition $\log \log u' - \log \log \ell' \leq 1$ holds.
	At least in round $r = \log \log (u - \ell)$, which is by definition upper bounded by $\log \log \Delta$, the algorithm terminates.
	
	If $\mathcal{F}(t)$ is not valid at time $t'$, there are nodes $i_1 \in \mathcal{F}_1$, $i_2 \in \mathcal{F}_2$ and time points $t_1, t_2$ ($t_1 = t' \vee t_2 = t'$) for which $v_{i_1}^{t_1} < v_{i_2}^{t_2}$ holds. 
	Thus, $\mathcal{A}_1$ observed a filter-violation by either $i_1$ or $i_2$ followed by a sequence alternating between filter-violations and filter-updates.
	At some point (but still at time $t'$) $\log \log u' - \log \log \ell' \leq 1$ holds and the algorithm outputs $(\ell', u')$, proving $\mathcal{A}_1$'s correctness for time $t'$.		
\end{proof}

\begin{lemma} 
\label{le:Phase2}
For a given $\mathcal{F}(t)$ and a given interval $L = [\ell, u]$ for which (P2) holds, there is an algorithm $\mathcal{A}_2$ that witnesses the correctness of $\mathcal{F}(t)$ until a time $t'$ at which it outputs $L' = [\ell', u']$ for which (P2) does not hold.
The algorithm uses $\bigO(1)$ messages on expectation. 	
\end{lemma}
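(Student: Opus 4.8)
The plan is to instantiate the generic framework of Section~\ref{sec:existence} with the phase-2 rule and then to establish correctness and the message bound separately. For correctness I would track the invariant that the running interval always contains the interval $L^*$ from which an optimal offline algorithm could have fixed its separator; for the message bound I would show that the geometric-mean search leaves the regime (P2) after only $\bigO(1)$ filter-violations, each of which costs $\bigO(1)$ messages on expectation by Corollary~\ref{co:validate}.

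\emph{The algorithm $\mathcal{A}_2$.} Run the generic approach starting from $L_0 = L = [\ell,u]$. In round $r$, broadcast $m \coloneqq 2^{mid}$ where $mid$ is the midpoint of $[\log\ell_r,\log u_r]$, so that $m = \sqrt{\ell_r u_r}\in L_r$; assign filter $F_1 = [m,\infty]$ to the nodes of $\mathcal{F}(t)$ and $F_2 = [0,m]$ to the remaining nodes; and use Corollary~\ref{co:validate} to detect whether some node observes a violation, followed by $\bigO(1)$ messages to learn the violating node and its value. On a violation from above -- a node $i\in\mathcal{F}(t)$ with $v_i < m$ -- set $L_{r+1}\coloneqq L_r\cap[0,v_i]$; on a violation from below -- a node $j\notin\mathcal{F}(t)$ with $v_j > m$ -- set $L_{r+1}\coloneqq L_r\cap[v_j,\infty]$. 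Halt and output $L' = L_{r+1}$ as soon as (P2) fails; the only way (P2) can become false is $u'\le 4\ell'$, since $u_r$ only decreases and $\ell_r$ only increases, so $\log\log u_r\le\log\log\ell_r+1$ stays true once it holds.

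\emph{Correctness.} The invariant $L^*\subseteq L_r$ holds by hypothesis for $r=0$ and is preserved by every update: by Observation~\ref{le:apxFilterDef} applied to the offline algorithm's (unchanged) filters, a reported value $v_i$ of a node in $\mathcal{F}(t)$ forces the offline lower filter bound to be at most $v_i$, while a reported value $v_j$ of a node outside $\mathcal{F}(t)$ forces it to be at least $v_j$; hence each intersection keeps $L^*$ inside. In particular $L'\neq\emptyset$ at the halt (an optimal offline algorithm has a valid separator during $[t,t']$), so $\mathcal{A}_2$ can fix valid online filters around a point of $L'$ before time $t'+1$, as required of a filter-based algorithm. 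Between consecutive violations, $m\in L_r$ lies above every value currently observed outside $\mathcal{F}(t)$ (these nodes carry filter $[0,m]$) and below every value currently observed inside $\mathcal{F}(t)$ (filter $[m,\infty]$); thus $m$ is a certificate and $\{[0,m],[m,\infty]\}$ is a valid set of filters for $\mathcal{F}(t)$, so any value change that would invalidate $\mathcal{F}(t)$ produces a violation that $\mathcal{A}_2$ observes at that very time -- in particular by some time $\le t'$ if $\mathcal{F}(t)$ ever becomes incorrect.

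\emph{Messages, and the main obstacle.} Detecting a violation costs $\bigO(1)$ messages on expectation (Corollary~\ref{co:validate}, hence Lemma~\ref{le:twoBuckets}), and learning the violating node/value and broadcasting the new $m$ add only $\bigO(1)$ more; so it suffices to bound the number of violations handled while (P2) holds by a constant. Each geometric-mean step halves $\log u_r-\log\ell_r$ in the worst case (a violation with $v_i=m$ or $v_j=m$ replaces $\log u_r$ or $\log\ell_r$ by $mid$), and the search stops once $\log u_r-\log\ell_r\le 2$, i.e.\ $u_r\le 4\ell_r$. Since (P2) only gives $\log u\le 2\log\ell$, a bare halving-depth estimate yields merely $\bigO(\log\log\ell)=\bigO(\log\log\Delta)$, so the constant has to be extracted from a sharper description of the interval $\mathcal{A}_1$ hands over -- this is the step I expect to be the main obstacle. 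Using that $\mathcal{A}_1$ (Lemma~\ref{le:Phase1}) stops the instant (P1) fails, and that the update producing $[\ell,u]$ came from the doubly-exponential rule $\ell_0+2^{2^r}$, one should control $\log u-\log\ell$ at the start of phase~2 tightly enough that only $\bigO(1)$ geometric-mean splits remain before $u\le 4\ell$. Once that handover property is pinned down, $\mathcal{A}_2$ handles $\bigO(1)$ violations and, by the per-violation bound above, sends $\bigO(1)$ messages on expectation.
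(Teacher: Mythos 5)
Your algorithm and your correctness argument coincide with the paper's: the same geometric-mean rule $m = 2^{mid}$ inside the generic framework, the same invariant $L^* \subseteq L_r$ maintained through filter-violations, and the same use of Corollary~\ref{co:validate} to charge $\bigO(1)$ messages per violation. The gap is exactly where you flag it: bounding the number of violations handled while (P2) holds by a constant. Your proposed repair --- extracting a sharper ``handover'' property of the interval produced by $\mathcal{A}_1$ --- cannot close Lemma~\ref{le:Phase2} as stated, because the lemma quantifies over an \emph{arbitrary} interval $L = [\ell,u]$ satisfying (P2), with no reference to how $L$ arose; $\mathcal{A}_2$ must meet its bound on any such input, and indeed the paper makes no use of $\mathcal{A}_1$ in its proof.

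What the paper does instead is derive the count from (P2) alone: taking $a \in \N$ largest with $\ell \geq 2^{2^a}$, condition (P2) forces $u \leq 2^{2^{a+2}}$, hence $[\log\ell, \log u] \subseteq [2^a, 4\cdot 2^a]$, and the paper asserts that this interval can be halved only a constant number of times before $u \leq 4\ell$. You should be aware, though, that your own back-of-the-envelope computation is in genuine tension with that assertion: the length of $[\log\ell,\log u]$ can be as large as $3\cdot 2^a \approx \log\ell$, and halving it down to length $2$ (which is what $u \leq 4\ell$ means) takes $\Theta(a) = \Theta(\log\log\ell)$ halvings, e.g.\ for $\ell = 2^{2^a}$ and $u = \ell^2$. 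The double-exponential bracketing does yield $\bigO(\log\log\Delta)$ rounds for this phase --- which is absorbed by the phase-1 cost and leaves Lemma~\ref{le:combined} and Theorem~\ref{th:scatteredCompetitive} intact --- but the per-phase $\bigO(1)$ claim does not follow from the halving argument as given. In short: your writeup is incomplete because the central counting step is missing; the route you propose for filling it does not match the lemma statement; the paper's own filling of that step is the bracketing above; and your instinct that a bare halving-depth estimate only gives $\bigO(\log\log\Delta)$ is the right thing to press on.
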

\begin{proof}
We apply the generic approach and choose the value $m$ to be broadcasted by $2^{mid}$, where $mid$ is the midpoint of $[\log \ell, \log u]$. 

To analyze the amount of messages needed, bound $L = [\ell, u]$ in terms of values that are double exponential in $2$.
To this end, let $a\in \N$ be the largest number such that $\ell \geq 2^{2^{a}}$ holds.
Now observe since (P2) holds, $u \leq 2^{2^{a+2}}$ follows.
Since the algorithm chooses the midpoint of the interval $[\log \ell, \log u]$ in order to get $m$ and halves this interval after every filter-violation, one can upper bound the number of rounds by analyzing how often the interval $[\log \ell, \log u]$ gets halved. 
This is  $[\log \ell, \log u] \subseteq \left[ \log \left( 2^{2^{a}}\right), \log \left( 2^{2^{a+3}}  \right)  \right]  = [2^{a}, 8 * 2^{a}]$ can be halved at most a constant number of times, until it contains only one value, which implies that $4 \cdot \ell > u$ holds.
\end{proof}

\begin{lemma} 
\label{le:Phase3}
For a given $\mathcal{F}(t)$ and a given interval $L = [\ell, u]$ for which (P3) holds, there is an algorithm $\mathcal{A}_3$ that witnesses the correctness of $\mathcal{F}(t)$ until a time $t'$ at which it outputs $L' = [\ell', u']$ for which (P3) does not hold.
The algorithm uses $\bigO(\log \nicefrac{1}{\varepsilon})$ messages on expectation. 	
\end{lemma}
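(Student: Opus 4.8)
The plan is to instantiate the generic framework with the natural \emph{midpoint rule}: in round $r$, based on the current interval $L_r = [\ell_r, u_r]$, the server broadcasts its midpoint $m$, sets the two filters to $[0,m]$ and $[m,\infty]$, and upon a reported filter-violation by a node $i$ observing $v_i$ it updates $L_{r+1} \coloneqq L_r \cap [0, v_i]$ if the violation is from above and $L_{r+1} \coloneqq L_r \cap [v_i, \infty]$ otherwise. Write $[\ell_0,u_0]$ for the interval with which $\mathcal{A}_3$ is invoked. The algorithm stops and hands over $L' \coloneqq L_{r+1}$ as soon as (P3) fails for $L'$. Since the generated intervals are nested, $L_{r+1} \subseteq L_r$, so $u' \le u_0 \le 4\ell_0 \le 4\ell'$; thus the first conjunct of (P3) remains true throughout the phase, and (P3) fails precisely when $u' \le \frac{1}{1-\varepsilon}\ell'$, i.e. exactly when (P4) becomes applicable and the next phase can take over with the valid filters $[\ell', \infty]$, $[0, u']$.

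First I would prove the message bound. A node of $\mathcal{F}_1$ carries the filter $[m,\infty]$ and a node of $\mathcal{F}_2$ carries $[0,m]$, so a reported violation always stems from a value $v_i$ lying strictly on the side of $m$ opposite to that node's filter; hence $L_{r+1}$ is contained in one of the two halves of $L_r$, and the length of the interval is at least halved by every violation. As (P3) gives $u_0 \le 4\ell_0$, the initial length satisfies $u_0 - \ell_0 \le 3\ell_0$, and as $\ell$ only increases during the phase, $\ell_r \ge \ell_0$ holds throughout; therefore after $r$ violations $u_r \le \ell_r + (u_0-\ell_0)/2^r \le \ell_r\,(1 + 3/2^r)$, which is at most $\frac{1}{1-\varepsilon}\ell_r$ as soon as $2^r \ge 3(1-\varepsilon)/\varepsilon$. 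So at most $\lceil \log(3(1-\varepsilon)/\varepsilon)\rceil = \bigO(\log(1/\varepsilon))$ violations, and hence rounds, occur before (P3) fails. Each round consists of one broadcast of $m$ together with the detection and reporting of a single filter-violation, which by Corollary~\ref{co:validate} (the reporting node appending its observed value, as in the \textsc{ExistenceProtocol}) costs $\bigO(1)$ messages in expectation; by linearity of expectation over the $\bigO(\log(1/\varepsilon))$ rounds this yields the claimed bound.

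It remains to argue correctness, which I would carry out along the lines of the proof of Lemma~\ref{le:Phase1}. As long as no node reports a violation, the filters assigned by $\mathcal{A}_3$ are valid — they separate $\mathcal{F}_1$ from $\mathcal{F}_2$ at the single point $m$, so the condition $\ell_i \ge (1-\varepsilon)u_j$ of Observation~\ref{le:apxFilterDef} reduces to $m \ge (1-\varepsilon)m$ — and the server thereby witnesses that $\mathcal{F}(t)$ is still a correct output. If $\mathcal{F}(t)$ stops being correct at a time $t'$, then (as in Lemma~\ref{le:Phase1}) there are $i_1 \in \mathcal{F}_1$, $i_2 \in \mathcal{F}_2$ and times $t_1, t_2 \le t'$ with $t_1 = t'$ or $t_2 = t'$ and $v_{i_1}^{t_1} < v_{i_2}^{t_2}$; one of these observations causes a filter-violation, and the ensuing alternation of violations and midpoint updates — still at time $t'$ — keeps shrinking $L$ (filters of the form $[0,m]$, $[m,\infty]$ can never contain both $v_{i_1}^{t_1}$ and $v_{i_2}^{t_2}$ at once), so after $\bigO(\log(1/\varepsilon))$ rounds (P3) fails, $\mathcal{A}_3$ outputs the current $L' = [\ell', u']$, and the next phase handles any residual violation — which establishes $\mathcal{A}_3$'s correctness at $t'$.

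I expect the only non-routine step to be the message analysis, specifically the realization that the stopping condition ``$u \le \frac{1}{1-\varepsilon}\ell$'' amounts to the interval length falling below $\tfrac{\varepsilon}{1-\varepsilon}\ell_0$, so that the geometric halving of $L$, combined with the bound $u_0-\ell_0 \le 3\ell_0$ inherited from (P3), needs only $\bigO(\log(1/\varepsilon))$ rounds rather than the $\log\Delta$ rounds a naive midpoint search would incur. The correctness argument and the per-round $\bigO(1)$ message cost are, in essence, reused from $\mathcal{A}_1$ and Corollary~\ref{co:validate}.
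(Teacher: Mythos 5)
Your proof is correct and follows essentially the same route as the paper's: instantiate the generic framework with the midpoint rule, observe that (P3) guarantees $u_0-\ell_0\le 3\ell_0$ so that $\bigO(\log\frac{1}{\varepsilon})$ halvings suffice before $u\le\frac{1}{1-\varepsilon}\ell$ holds, charge $\bigO(1)$ expected messages per round via the \textsc{ExistenceProtocol}, and inherit the correctness argument from Lemma~\ref{le:Phase1}. Your write-up merely makes explicit the calculation that the paper leaves as an ``observe''.
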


\ifConferenceVersion
The proof of this lemma can be found in the full version.
\else
\begin{proof}
The algorithm applies the generic framework and uses the midpoint strategy starting with the interval $L_0 \coloneqq [\ell, u]$.
Observe that it takes at most $\bigO\left( \log \frac{1}{\varepsilon}\right)$ redefinitions of $L$ to have the final size, no matter whether the algorithm observes only filter-violations from nodes $i \in \mathcal{F}(t)$ or $i \notin \mathcal{F}(t)$.
This together with the use of the \textsc{ExistenceProtocol} for handling filter-violations yields the needed number of messages on expectation.
The correctness follows similarly as shown for Lemma~\ref{le:Phase1}.
\end{proof}
\fi

Now we propose an algorithm started at a time $t$ which computes the output $\mathcal{F}(t)$ and witnesses its correctness until some (not predefined) time $t'$ at which the \Scattered terminates using a combination of the algorithms stated above.
Precisely the \Scattered is defined as follows:

\paragraph*{Algorithm} \Scattered
\begin{enumerate}
	\item[1.] Compute the nodes holding the $(k+1)$ largest values and define
	$\ell \coloneqq v_{k+1}^{t}$, 	$u \coloneqq v_k^{t}$ and $\mathcal{F}(t)$.	
	
	\item[2.] If (P1) holds, call $\mathcal{A}_1$ with the arguments $\mathcal{F}(t)$ and $L = [\ell, u]$.
	At the time $t'$ at which $\mathcal{A}_1$ outputs $L' = [\ell', u']$ set  $\ell \coloneqq \ell'$ and $u \coloneqq u'$. 

	\item[3.] If (P2) holds, call $\mathcal{A}_2$ with the arguments $\mathcal{F}(t)$ and $L = [\ell, u]$. 
	At the time $t'$ at which $\mathcal{A}_2$ outputs $L' = [\ell', u']$ set  $\ell \coloneqq \ell'$ and $u \coloneqq u'$. 
	
	\item[4.] If (P3) holds, call $\mathcal{A}_3$ with the arguments $\mathcal{F}(t)$ and $L = [\ell, u]$. 
	At the time $t'$ at which $\mathcal{A}_3$ outputs $L' = [\ell', u']$ set  $\ell \coloneqq \ell'$ and $u \coloneqq u'$. 
	
	\item[5.] If $u \geq \ell$ and $u \leq \frac{1}{(1-\varepsilon)} \ell$ holds, set the filters to $F_1 \coloneqq [\ell, \infty]$, $F_2 \coloneqq [0, u]$.
	At the time $t'$ at which node $i \in \mathcal{F}_2$ reports a filter-violation from below define $\ell \coloneqq v_i^{t'}$. 
	In case node $i \in \mathcal{F}_1$ reports a filter-violation from above, define $u \coloneqq v_i^{t'}$. 
	
	\item[6.] Terminate and output $(\ell, u)$.
\end{enumerate}

\begin{lemma}
\label{le:combined}
Consider a time $t$.
The algorithm \Scattered computes the top-$k$ set and witnesses its correctness until a time $t'$ at which it outputs $L = [\ell, u]$, where $\ell \leq \MAX_{\mathcal{F}_2}(t, t')$, $\MIN_{\mathcal{F}_1}(t, t') \leq u$, and $\ell > u$ holds (i.e.\ $L$ is empty).
The algorithm uses $\bigO (k \log n +  \log \log \Delta + \log \frac{1}{\varepsilon})$ messages on expectation. 
\end{lemma}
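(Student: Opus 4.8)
The plan is to verify three things in turn: that \Scattered is well defined (the four phases are checked in order and each is executed at most once), that it is correct (Step~1 does compute the top-$k$ set, and afterwards the algorithm always holds a valid set of filters and maintains the two stated inequalities on $\ell$ and $u$), and that the per-phase expected message counts add up to the claimed bound.

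First I would pin down the phase structure. \Scattered checks (P1)--(P4) in order, and for any $\ell\le u$ at least one of them applies: if not (P1) then $\log\log u\le\log\log\ell+1$, and then either (P2) holds ($u>4\ell$), or $u\le 4\ell$ and then either (P3) ($u>\frac{1}{1-\varepsilon}\ell$, using $\frac{1}{1-\varepsilon}\le 2\le 4$ for $\varepsilon\le 1/2$) or (P4). The decisive observation is that every interval update inside the generic framework, and every update in Step~5, can only raise $\ell$ or lower $u$; hence $\log\log u$ is nonincreasing and the ratio $u/\ell$ is nonincreasing along the run, so once the entry condition of a phase is left it stays violated. Together with Lemmas~\ref{le:Phase1}--\ref{le:Phase3} --- each $\mathcal{A}_i$ exits exactly when (P$i$) ceases to hold --- this shows that \Scattered runs through a prefix of the phases: when Step~$i{+}1$ is reached, either $\mathcal{A}_i$ (resp.\ Step~5) is entered with its entry property holding or it is skipped, so $\mathcal{A}_1,\mathcal{A}_2,\mathcal{A}_3$ and Step~5 run at most once each. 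If a cascade of filter-violations inside a single time step overshoots and produces $\ell>u$ during some phase, then all remaining entry conditions and the Step~5 guard fail, the algorithm falls through to Step~6, and it terminates with an empty $L$.

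Next, correctness. I would establish by induction the invariant that at every time $t'$ of the run $\ell\le\MAX_{\mathcal{F}_2}(t,t')$ and $\MIN_{\mathcal{F}_1}(t,t')\le u$. It holds after Step~1 since there $\ell=v_{\pi(k+1,t)}^t=\MAX_{\mathcal{F}_2}(t,t)$ and $u=v_{\pi(k,t)}^t=\MIN_{\mathcal{F}_1}(t,t)$. It is preserved because, within the generic framework, a node of $\mathcal{F}_1$ (filter $[m,\infty]$) can only violate from above, whereupon $u$ is set to its value, while a node of $\mathcal{F}_2$ (filter $[0,m]$) can only violate from below, whereupon $\ell$ is set to its value; together with the facts that $\MAX_{\mathcal{F}_2}(t,\cdot)$ is nondecreasing and $\MIN_{\mathcal{F}_1}(t,\cdot)$ is nonincreasing in its second argument, the two inequalities are maintained, and the same reasoning applies verbatim to Step~5. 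Moreover the filters are always valid: in each phase the broadcast value $m$ is refined to a certificate, so $v_i\ge m$ for $i\in\mathcal{F}_1$, $v_j\le m$ for $j\in\mathcal{F}_2$, and $m\ge(1-\varepsilon)m$ is automatic, while in Step~5 property (P4) gives $\ell\ge(1-\varepsilon)u$; by Observation~\ref{le:apxFilterDef} the intervals thus form a set of filters, and (as argued in Lemmas~\ref{le:Phase1}--\ref{le:Phase3} and directly for Step~5) \Scattered witnesses the correctness of $\mathcal{F}(t)$ throughout $[t,t']$. Finally, Step~6 is reached exactly when $L$ is empty: either through a filter-violation in Step~5 (which sets $\ell:=v_i^{t'}>u$ or $u:=v_i^{t'}<\ell$), or by falling through with $\ell>u$ already; combined with the invariant this yields $\ell\le\MAX_{\mathcal{F}_2}(t,t')$, $\MIN_{\mathcal{F}_1}(t,t')\le u$ and $\ell>u$ at the output time.

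The message complexity is then a straightforward sum: Step~1 determines the $k+1$ largest values (e.g.\ by $k+1$ applications of Lemma~\ref{le:topk}) at expected cost $\bigO(k\log n)$; Phases~1--3 cost $\bigO(\log\log\Delta)$, $\bigO(1)$ and $\bigO(\log\frac{1}{\varepsilon})$ expected messages by Lemmas~\ref{le:Phase1}, \ref{le:Phase2} and~\ref{le:Phase3}; and Step~5 broadcasts its two filters once and thereafter, at each time step, runs the test of Corollary~\ref{co:validate}, which causes no communication while no violation occurs and only $\bigO(1)$ expected messages at the single time step where a violation finally occurs (after which the algorithm terminates), so Step~5 costs $\bigO(1)$ in expectation over the whole, possibly unbounded, run. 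As each phase runs at most once, linearity of expectation gives the total $\bigO(k\log n+\log\log\Delta+\log\frac{1}{\varepsilon})$. I expect the only genuinely delicate point to be this handoff bookkeeping --- checking exhaustiveness of (P1)--(P4) together with ``$L$ empty'', that the monotone shrinking of $L$ makes each $\mathcal{A}_i$'s exit condition match the next step's entry condition so the per-phase bounds apply exactly once, and that the boundary estimate $\frac{1}{1-\varepsilon}\le 2$ and the mid-phase overshoot $\ell>u$ are absorbed by skipping the remaining steps; everything else is a direct assembly of the already-established per-phase statements and Corollary~\ref{co:validate}.
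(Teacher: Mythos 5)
Your proposal is correct and follows essentially the same route as the paper's proof: correctness of Step~1 via Lemma~\ref{le:topk}, correctness and termination of the middle phases via Lemmas~\ref{le:Phase1}--\ref{le:Phase3}, validity of the Step~5 filters from $u\le\frac{1}{1-\varepsilon}\ell$, and the message bound obtained by summing the per-phase costs since each phase runs at most once. Your added bookkeeping (monotonicity of $u/\ell$ ensuring each entry condition, once violated, stays violated, and the explicit invariant $\ell\le\MAX_{\mathcal{F}_2}(t,t')$, $\MIN_{\mathcal{F}_1}(t,t')\le u$) only makes explicit what the paper leaves implicit.
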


\ifConferenceVersion
The proof of this lemma can be found in the full version.
\else
\begin{proof}
We first argue on the correctness of \Scattered and afterwards shortly analyze the number of messages used.

The algorithm computes in step 1. a correct output $\mathcal{F}_1$ at time $t$ by using the algorithm from Lemma~\ref{le:topk} for $k$ times. 
In consecutive time steps $t' > t$ the correctness of \Scattered follows from the correctness of algorithms $\mathcal{A}_1, \mathcal{A}_2,$ and $\mathcal{A}_3$ in steps 2. - 4.
For the correctness of step 5. observe that by setting the filters to $F_1 = [\ell, \infty]$ and $F_2 = [0, u]$ and the fact that $u \leq \frac{1}{1-\varepsilon}\ell$ holds the filters are valid.
Thus, as long as all nodes observe values which are inside their respective filters the output need not change.

At the time step $t'$ the protocol terminates and outputs $L = [\ell, u]$ it holds $u < \ell$.
Thus, there are nodes $i_1 \in \mathcal{F}_1$ and $i_2 \in \mathcal{F}_2$ and time steps $t_1, t_2 \in [t, t']$ with: $v_{i_1}^{t_1} \leq u$ and $v_{i_2}^{t_2} \geq \ell$, and thus, $v_{i_1}^{t_1} < v_{i_2}^{t_2}$. 

To argue on the number of messages observe that the first step can be executed using $\bigO(k \log n)$ number of messages. 
At the time the condition of steps 2. - 5. are checked these steps can be performed using $\bigO(k \log n)$ number of messages, by computing the nodes holding the $k+1$ largest values. 
The algorithms $\mathcal{A}_1, \mathcal{A}_2,$ and $\mathcal{A}_3$ are called at most once each thus the conditions are also checked at most once.
After executing step 5. the algorithm terminates which leads to the result on the number of messages as stated above.
\end{proof}
\fi

\begin{theorem}
\label{th:scatteredCompetitive}
The algorithm \Scattered has a competitiveness of $\bigO(k \log n + \log \log \Delta + \log \frac{1}{\varepsilon})$ allowing an error of $\varepsilon$ compared to an optimal offline algorithm that solves the exact Top-$k$-Position Monitoring problem. 
\end{theorem}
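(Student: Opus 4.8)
The plan is to run \Scattered repeatedly, so that the timeline is partitioned into consecutive \emph{executions}, each execution being one invocation of steps~1--6 of \Scattered; I fix the convention that a new execution starts (re-computing the top-$k$ in step~1) one time step after the previous one terminated, so that the time intervals associated with distinct executions are pairwise disjoint. By Lemma~\ref{le:combined} each execution costs $\bigO(k\log n + \log\log\Delta + \log\frac{1}{\varepsilon})$ messages on expectation. Hence it suffices to prove that any offline algorithm $\mathrm{OPT}$ solving the \emph{exact} Top-$k$-Position Monitoring problem must communicate at least once during each execution: then the number of executions is at most the communication of $\mathrm{OPT}$, and multiplying by the per-execution bound gives the claimed competitiveness.

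To establish this, fix one execution on a time interval $[t,t']$. In step~1 the online algorithm sets $\mathcal{F}_1 := \mathcal{F}(t)$ to the exact top-$k$ set at time $t$ (unique, since values are distinct), and by Lemma~\ref{le:combined} at termination it holds $\MIN_{\mathcal{F}_1}(t,t') \le u < \ell \le \MAX_{\mathcal{F}_2}(t,t')$ with $\mathcal{F}_2 = \{1,\dots,n\}\setminus\mathcal{F}_1$; in particular $\MIN_{\mathcal{F}_1}(t,t') < \MAX_{\mathcal{F}_2}(t,t')$. Assume toward a contradiction that $\mathrm{OPT}$ does not communicate during $[t,t']$, i.e.\ it uses one fixed set of filters throughout. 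Since $\mathrm{OPT}$ solves the exact problem, at time $t$ its output is forced to be the unique exact top-$k$ set, which is exactly $\mathcal{F}_1$; being silent on $[t,t']$, its output stays $\mathcal{F}^*_1 = \mathcal{F}_1$ (hence $\mathcal{F}^*_2 = \mathcal{F}_2$) throughout. Applying the exact-case version of Lemma~\ref{le:filterChange} (i.e.\ Lemma~\ref{le:filterChange} with $\varepsilon = 0$, which is available because an exact offline algorithm must assign disjoint filters, $\ell_i \ge u_j$, by the exact analogue of Observation~\ref{le:apxFilterDef}) yields $\MIN_{\mathcal{F}_1}(t,t') = \MIN_{\mathcal{F}^*_1}(t,t') \ge \MAX_{\mathcal{F}^*_2}(t,t') = \MAX_{\mathcal{F}_2}(t,t')$, contradicting the strict inequality above. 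Therefore $\mathrm{OPT}$ communicates during $[t,t']$, and since the execution intervals are disjoint, $\mathrm{OPT}$'s total communication is at least the number of executions.

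Combining the two parts, the expected online cost is at most $\bigO(k\log n + \log\log\Delta + \log\frac{1}{\varepsilon})$ times the number of executions, which is at most $\mathrm{OPT}$, proving the stated competitiveness. I expect the crux to be the step that identifies the online output $\mathcal{F}_1$ with the offline output $\mathcal{F}^*_1$ over the \emph{entire} interval $[t,t']$: this relies on the exact problem having a unique correct output at time $t$ (forcing agreement there) together with the fact that a silent $\mathrm{OPT}$ cannot let its output drift away from $\mathcal{F}_1$, so that Lemma~\ref{le:filterChange} can be invoked on the same bipartition $(\mathcal{F}_1,\mathcal{F}_2)$. The remaining points — making the execution intervals genuinely disjoint so that no single offline message is charged to two executions, and, against an adaptive adversary, summing the expected per-execution costs by linearity of expectation — are routine once the execution-boundary convention is fixed.
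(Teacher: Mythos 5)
Your proof is correct and follows essentially the same route as the paper's: both hinge on the exact adversary being forced to output the same set $\mathcal{F}_1$ (by uniqueness of the exact top-$k$) and on the termination guarantee of Lemma~\ref{le:combined} ruling out any single static pair of filters over $[t,t']$. The only cosmetic difference is that you derive the contradiction from the $\MIN_{\mathcal{F}_1}(t,t') \leq u < \ell \leq \MAX_{\mathcal{F}_2}(t,t')$ inequalities of Lemma~\ref{le:combined} via the exact ($\varepsilon=0$) case of Lemma~\ref{le:filterChange}, whereas the paper re-establishes the equivalent invariant $L^* \subseteq L$ by a case analysis on filter violations.
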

\begin{proof}
The correctness of \Scattered and the number of messages follow from Lemma~\ref{le:combined}.
Now we argue that OPT had to communicate at least once in the interval $[t, t']$ during which \Scattered was applied.
If OPT communicated, the bound on the competitiveness directly follows.
Now assume that OPT did not communicate in the interval $[t,t']$.
We claim that the interval $L$ maintained during \Scattered always satisfies the invariant $L^* \subseteq L$.
If this claim is true, we directly obtain a contradiction to the fact that OPT did not communicate because of the following reasons.
On the one hand, because OPT has to monitor the exact Top-$k$-Positions, OPT chooses the same set of nodes $\mathcal{F}^* = \mathcal{F}_1$ which was chosen by the online algorithm.
On the other hand, at the time $t'$ the algorithm \Scattered terminates, $u' < \ell'$ holds. 
Thus, the interval $L'$ is empty and since $L^* \subseteq L'$ holds, it follows that $L^*$ is empty and hence, OPT must have communicated.

We now prove the claim.
Recall that \Scattered is started with an interval $L$ that fulfills $L^* \subseteq L$ by definition.
To show that $L^* \subseteq L$ holds during the entire interval $[t,t']$, it suffices to argue that each of the previous algorithms makes sure that when started with an interval $L$ such that $L^* \subseteq L$, it outputs $L'$ with $L^* \subseteq L'$.
Our following reasoning is generic and can be applied to the previous algorithms.
Consider the cases in which filter-violations are observed and hence the interval $L$ is modified:
If a filter-violation from below happened at a time $t_1 > t$, there is a node $i \in \mathcal{F}_2$ with a value $v_i^{t_1} > \ell'$ and thus, $\ell^* > \ell'$ holds.
If a filter-violation from above happened at a time $t'$, there is a node $i \in \mathcal{F}_1$ with a value $v_i^{t'} < u'$ and thus, $u^* < u'$ holds.
This case-distinction leads to the result, that $L^*$ has to be a subset of $[\ell', u']$.
\end{proof}

\section{Competing against an Approximate Adversary}
\label{sec:denseData}
In this section, we study the case in which the adversary is allowed to use an approximate filter-based offline algorithm, i.e.\ one that solves \etopk.
Not surprisingly, it turns out that it is much more challenging for online than for offline algorithms to cope with or exploit the allowed error in the output. 
This fact is formalized in the lower bound in Theorem~\ref{le:lowerBound}, which is larger than previous upper bounds for the exact problem.
However, we also propose two online algorithms that are competitive against offline algorithms that are allowed to have the same error $\varepsilon$ and a smaller error $\varepsilon' \leq  \frac{\varepsilon}{2}$, respectively.

\subsection{Lower Bound for Competitive Algorithms}
We show a lower bound on the competitiveness proving any online algorithm has to communicate at least $(\sigma - k)$ times in contrast to an offline algorithm which only uses $k+1$ messages.
Recall that the adversary generates the data streams and can see the filters communicated by the server.
Note that as long as the online and the offline algorithm are allowed to make use of an error $\varepsilon \in (0, 1)$ the lower bound holds, even if the errors are different.

\begin{theorem}
\label{le:lowerBound}
Any filter-based online algorithm which solves the \etopk problem and is allowed to make use of an error of $\varepsilon \in (0, 1)$ has a competitiveness of $\Omega \left(\nicefrac{\sigma}{k} \right)$ compared to an optimal offline algorithm which is allowed to use a (potentially different) error of $\varepsilon' \in (0, 1)$.
\end{theorem}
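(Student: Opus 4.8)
The plan is to construct an adversarial instance in which many nodes crowd into the $\varepsilon$-neighborhood of the $k$-th largest value, and to show that any online algorithm must repeatedly react while the offline algorithm can fix a single static choice of filters that remains valid throughout. First I would set up the instance: fix a time window $[t, t']$ and place $\sigma$ nodes so that their values all lie within the $\varepsilon$-neighborhood $A(t'')$ of the $k$-th largest value for every $t'' \in [t,t']$, with the remaining $n-\sigma$ nodes split into a fixed ``clearly large'' group (say $k-1$ of them, with huge values far above the neighborhood) and a fixed ``clearly small'' group (well below the neighborhood). Thus at every time step $k-1$ nodes are forced into $\mathcal{F}_E$ and exactly one of the $\sigma$ crowded nodes must be selected into $\mathcal{F}_A$ to complete the output set of size $k$. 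The offline algorithm exploits the $\varepsilon'$-slack: it picks any one crowded node $j^*$ once and for all, assigns it (together with the $k-1$ large nodes) the upper filter and all others the lower filter; since all $\sigma$ crowded values stay inside a multiplicative $(1\pm\varepsilon')$-window around each other (by choosing the crowd tight enough relative to $\varepsilon'$), Observation~\ref{le:apxFilterDef} is satisfied and these filters remain valid for the whole window. This costs OPT only $O(k)$ messages (in fact $k+1$) to set up initially and nothing thereafter.

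Next I would argue the online lower bound via an adversary/potential argument. Consider any filter-based online algorithm. At any time its output designates one crowded node, say $a$, as the member of $\mathcal{F}_A$; by Observation~\ref{le:apxFilterDef} node $a$'s filter lower endpoint $\ell_a$ and every other crowded node $b$'s filter upper endpoint $u_b$ must satisfy $\ell_a \geq (1-\varepsilon)u_b$. The adversary, knowing the current filters, now pushes the value of some currently-unselected crowded node $b$ up above $u_b$ (staying inside the common neighborhood, which is possible because the crowd is tight relative to $\varepsilon$ but the filters only guarantee the weaker $(1-\varepsilon)$ separation) — this forces a filter-violation and a message. After the algorithm responds by recomputing filters and possibly a new output, the adversary repeats with a node not involved in the latest violation. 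The key counting point is that each such forced round can ``resolve'' only a bounded number — $O(k)$ — of crowded nodes from the adversary's perspective (the algorithm can move at most the $k$ nodes in its output plus react to one violation), so after $\Omega(\sigma)$ violations the adversary is still not exhausted as long as $\sigma$ is large compared to $k$; hence the algorithm sends $\Omega(\sigma - k) = \Omega(\sigma)$ messages when $\sigma = \omega(k)$. Dividing by OPT's $O(k)$ gives competitiveness $\Omega(\sigma/k)$.

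The main obstacle I anticipate is making the adversarial forcing step airtight: I must exhibit, for whatever filters the online algorithm currently holds, a concrete value in the neighborhood that both lies in $A(t'')$ (so the output-correctness constraint is genuinely binding and OPT is unaffected) and violates some filter (so a message is forced), and I must ensure the adversary never ``runs out'' of such nodes before $\Omega(\sigma)$ messages have been charged. The clean way to do this is to pick the crowd extremely tight — all $\sigma$ values within a factor $(1+\varepsilon')$ of one another — so OPT's single choice is trivially valid, while observing that a valid filter set for the online algorithm only forces pairwise separation by factor $(1-\varepsilon)$; since $(1-\varepsilon) < 1 < (1+\varepsilon')^{-1}$ fails to hold automatically, the argument instead uses that with $\sigma$ crowded values the online filters partition them into a ``selected'' side and a ``rejected'' side, and the adversary swaps the roles of a selected and a rejected node, which necessarily violates the rejected node's upper filter or the newly-promoted node's lower filter. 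Carefully bookkeeping which nodes the algorithm has ``touched'' in a round and showing each round touches $O(k)$ of them is the crux; the rest is routine. I would also double-check the regime assumption (the bound is vacuous unless $\sigma \geq c k$ for a suitable constant) and state it, consistent with the $\Omega(n/k)$ phrasing in the introduction where $\sigma$ can be taken as large as $n - k + 1$.
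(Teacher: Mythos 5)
Your overall plan (crowd $\sigma$ nodes near the $k$-th value, let OPT commit once to a fixed selection and fixed filters at cost $k+1$, and force the online algorithm to react $\Omega(\sigma)$ times) is the same as the paper's, but your central forcing step does not go through. You propose to push a currently \emph{unselected} crowded node $b$ up above its filter endpoint $u_b$ ``while staying inside the neighborhood.'' The online algorithm can defeat this: with selected node $a$ at value $v_a$ it may set $u_b = \frac{1}{1-\varepsilon}v_a$ and $\ell_a = v_a$, which satisfies Observation~\ref{le:apxFilterDef} and keeps the output valid for every in-neighborhood value of $b$; any upward move of $b$ that actually violates $u_b$ pushes $b$ above $\frac{1}{1-\varepsilon}v_a$, which makes $a$ (and, for $\varepsilon'\le\varepsilon$, every other crowded node) clearly smaller than the new $k$-th largest value, so OPT's fixed output also becomes invalid and OPT is charged as well. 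Your closing ``swap the roles of a selected and a rejected node'' remark points in the right direction, but you explicitly leave the bookkeeping (``the crux'') unresolved, so the argument is incomplete as stated.

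The fix, and the paper's construction, is to move values \emph{down}, not up: start with $\sigma$ nodes all at a common value $y_0$ (so every one of the online algorithm's $k$ output slots is contested, no auxiliary huge nodes needed). At each step the adversary picks a node $i$ that the online algorithm currently has in its output and drops it to $y_1 < (1-\varepsilon)y_0$. Since some unselected node still sits at $y_0$, its filter upper endpoint is at least $y_0$, so Observation~\ref{le:apxFilterDef} forces $\ell_i \ge (1-\varepsilon)y_0 > y_1$ --- a guaranteed violation, with no case analysis about which nodes the round ``touches.'' Each step permanently removes one node, so this repeats $\sigma-k$ times, while the offline algorithm, knowing in advance which $k$ nodes survive, assigns them $[y_0,\infty]$ and broadcasts $[0,y_0]$ to the rest, incurring no violation since dropped values stay inside $[0,y_0]$. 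This yields $\Omega\left((\sigma-k)/k\right)$ directly; your caveat that the bound is only meaningful for $\sigma \ge ck$ is correct and worth keeping.
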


\begin{proof}
	Consider an instance in which the observed values of $\sigma \in [k+1, n]$ nodes are equal to some value $y_0$ (the remaining $n-\sigma$ nodes observe smaller values) at time $t = 0$ and the following adversary:
	In time step $r=0,1,\ldots, n-k$, the adversary decides to change the value of one node $i$ with $v_i^r=y_0$ to be $v_i^{r+1}=y_1<(1-\varepsilon) \cdot y_0$ such that a filter-violation occurs.
	Observe that such a value $y_1$ exists if $\varepsilon < 1$ holds and a node $i$ always exists since otherwise the filters assigned by the online algorithm cannot be feasible. 
	Hence, the number of messages sent by the online algorithm until time step $n-k$ is at least $n-k$.
	In contrast, the offline algorithm knows the $n-k$ nodes whose values change over time and hence, can set the filters such that no filter-violation happens.
	The offline algorithm sets two different filters: 
	One filter $F_1 = [y_0, \infty]$ for those $k$ nodes which have a value of $y_0$ at time step $n-k$ using $k$ messages and one filter $F_2 = [0, y_0]$ for the remaining $n-k$ nodes using one broadcast message.
	By essentially repeating these ideas, the input stream can be extended to an arbitrary length, obtaining the lower bound as stated. 
\end{proof}

\subsection{Upper Bounds for Competitive Algorithms}
\label{sec:upperBounds}
Now we propose an algorithm \Dense and analyze the competitiveness against an optimal offline algorithm in the setting that both algorithms are allowed to use an error of $\varepsilon$.

The algorithm \Dense is started a time $t$. 
For sake of simplicity we assume that the $k$-th and the $(k+1)$-st node observe the same value $z$, that is $z \coloneqq v_{\pi(k, t)}^t = v_{\pi(k+1, t)}^t$. 
However, if this does not hold we can define the filters to be $F_1 = [v_{\pi(k+1, t)}^t, \infty]$ and $F_2 = [0, v_{\pi(k, t)}^t]$ until a filter-violation is observed at some time $t'$ using $\bigO(k \log n)$ messages on expectation. 
If the filter-violation occurred from below define $z \coloneqq v_{\pi(k, t)}^t$ and if a filter-violation from above is observed define $z \coloneqq v_{\pi(k+1, t)}^t$.

The high-level idea of \Dense is similar to the \Scattered to compute a guess $L$ on the lower endpoint of the filter of the output $\mathcal{F}^*$ of OPT (assuming OPT did not communicate during $[t, t']$) for which the invariant $\ell^* \in L^* \subseteq L_r$ holds.
The goal of \Dense is to halve the interval $L$ while maintaining $\ell^* \in L$ until $L = \emptyset$ and thus show that no value exists which could be used by OPT. 

To this end, the algorithm partitions the nodes into three sets. 
Intuitively speaking, the first set which we call $V_1$ contains those nodes which have to be part of the optimal output, $V_3$ those nodes that cannot be part of any optimal output and $V_2$ the remaining nodes.
The sets change over time as follows.
Initially $V_1^t$ contains those nodes that observes a value $v_i^t > \frac{1}{1-\varepsilon}z$.
Since the algorithm may discover at a time $t' > t$ that some node $i$ has to be moved to $V_1^{t'+1}$ which also contains all nodes from previous rounds, i.e.\ $V_1^{t'} \subseteq V_1^{t'+1}$.
On the other hand $V_3^t$ initially contains the nodes which observed a value $v_i^t < (1-\varepsilon)z$.
Here also the algorithm may discover at a time $t' > t$ that some node $i$ has to be moved to $V_3^{t'+1}$ which (similar to $V_1$) contains nodes from previous rounds.
At the time $t$ the set $V_2^t$ simply contains the remaining nodes $\{1, \ldots, n\} \setminus (V_1^t \cup V_3^t)$ and its cardinality will only decrease over time.

In the following we make use of sets $S_1$ and $S_2$ to indicate that nodes in $V_2$ may be moved to $V_1$ or $V_3$ depending on the values observed by the remaining nodes in $V_2$.
Nodes in $S_1$ observed a value larger than $z$ but still not that large to decide to move it to $V_1$ and similarly nodes in $S_2$ observed smaller values than $z$ but not that small to move it to $V_3$. 

Next we propose the algorithm \Dense in which we make use of an algorithm \Subprotocol for the scenario in which some node $i$ exists that is in $S_1$ and in $S_2$. 
At a time at which the \Subprotocol terminates it outputs that $\ell^*$ has to be in the lower half of $L$ or in the upper half of $L$ thus, the interval $L$ gets halved (which initiates the next round) or moves one node from $V_2$ to $V_1$ or $V_3$.
Intuitively speaking \Subprotocol is designed such that, if OPT did not communicate during $[t, t']$, where $t$ is the time the \Dense is started and $t'$ is the current time step, the movement of one node $i \in V_2$ to $V_1$ or $V_3$ implies that $i$ has necessarily to be part of $\mathcal{F}^*$ or not.
For now we assume the algorithm \Subprotocol to work correctly as a black box using $SUB(n, |L|)$ number of messages.

Note that in case $L_r$ contains one value and gets halved, the interval $L_{r+1}$ is defined to be empty.
In case the algorithm observes multiple nodes reporting a filter-violation the server processes one violation at a time in an arbitrary order. 
Since the server may define new filters after processing a violation one of the multiple filter-violations may be not relevant any longer, thus the server simply ignores it.
\vspace{-0.3cm}
\paragraph*{Algorithm: \Dense}~\newline 
\vspace{-0.5cm}
\begin{itemize}
\item[1.]
	Define $z \coloneqq v_k^t = v_{k+1}^t$ and the following sets:\\
	$V_1 \coloneqq \{i \in \{1, \ldots, n\} \mid v_i^t > \frac{1}{1-\varepsilon}z \}, \\
	V_3 \coloneqq \left\{i \in \{1, \ldots, n\} \mid v_i^t < (1-\varepsilon)z\right\},  \\
	V_2 \coloneqq \{1, \ldots, n\} \setminus (V_1 \cup V_3).$

	Define an interval $L_0 \coloneqq [(1-\varepsilon)z, z]$ and 
	define sets $S_1, S_2$ of nodes which are initially empty and use $S$ to denote $S_1 \cup S_2$.
	Set $r \coloneqq 0$ indicating the round.

\item[2.] The following \textbf{rules} are applied for (some) round $r$: \\
	Let $\ell_r$ be the midpoint of $L_r$ and $u_r \coloneqq \frac{1}{1-\varepsilon} \ell_r$\newline
	For a node $i$ the filter is defined as follows: \\
	If $i \in V_1$, $F_i \coloneqq [\ell_r, \infty]$;\\ 
	If $i \in V_2 \cap S_1$, $F_i \coloneqq [\ell_r, \frac{1}{1-\varepsilon}z]$. \\
	if $i \in V_2 \setminus S$, $F_i \coloneqq [\ell_r, u_r]$; \\	
	If $i \in V_2 \cap S_2$, $F_i \coloneqq [(1-\varepsilon) z, u_r]$. \\
	if $i \in V_3$, $F_i \coloneqq [0, u_r]$. \\
	The output $\mathcal{F}(t)$ is defined as $V_1 \cup (S_1 \setminus S_2)$ and $k - |V_1 \cup (S_1 \setminus S_2)|$ many nodes from $V_2 \setminus S_2$.

\item[3.] Wait until time $t'$, at which some node $i$ reports a filter-violation:
	\begin{enumerate}
	\item[a.] \textbf{If} $i \in V_1$, \textbf{then} set $L_{r+1}$ to be the lower half of $L_r$ and define $S_2 \coloneqq \emptyset$.
	
	\item[b.] \textbf{If} $i \in (V_2 \setminus S)$ violates its filter from below \textbf{then}
		\begin{enumerate}
			\item[b.1.] \textbf{If} the server observed strictly more than $k$ nodes with larger values than $u_r$ 
			\textbf{then} set $L_{r+1}$ to be the upper half of $L_r$ and define $S_1 \coloneqq \emptyset$.
			\item[b.2.] \textbf{else} add $i$ to $S_1$ and update $i$'s filter.
		\end{enumerate}

	\item[c.] \textbf{If} $i \in S_1 \setminus S_2$ violates its filter \textbf{then}
	\begin{enumerate}
		\item[c.1.] \textbf{If} $i$ violates its filter from below 
		\textbf{then} move $i$ from $S_1$ and $V_2$ to $V_1$ and update $i$'s filter.
		\item[c.2.] \textbf{else} add $i$ to $S_2$ and call \Subprotocol.
	\end{enumerate}

	\item[d.] \textbf{If} the server observed $k$ nodes with values $v_i > u_r$ and $n-k$ nodes with values $v_i < \ell_r$ \textbf{then} call \Scattered
	
	\item[e.] \textbf{If} $L_{r+1}$ was set \textbf{if} is empty, end the protocol, otherwise increment $r$, update $u_r$, $\ell_r$, all filters using the rules in 2., and goto step 3.   

	~\\
	--- And their symmetric cases ---
	\item[a'.] \textbf{If} $i \in V_3$ \textbf{then} set $L_{r+1}$ to be the upper half of $L_r$ and define $S_1 \coloneqq \emptyset$.
	
	\item[b'.] \textbf{If} $i \in (V_2 \setminus S)$ violates its filter from above \textbf{then}
	\begin{enumerate}
		\item[b'.1.] \textbf{If} the server observed strictly more than $n - k$ nodes with smaller values than $\ell_r$ \textbf{then} set $L_{r+1}$ to the lower half of $L_r$ and define $S_2 \coloneqq \emptyset$.
		\item[b'.2.] \textbf{else} add $i$ to $S_2$.
	\end{enumerate}
		
	\item[c'.] \textbf{If} $i \in S_2 \setminus S_1$ violates its filter \textbf{then}
			\begin{enumerate}
				\item[c'.1.] \textbf{If} $i$ violates its filter from above \\ 
				\textbf{then} delete $i$ from $S_2$, delete $i$ from $V_2$, and add $i$ to $V_3$.
				\item[c'.2.] \textbf{else} add $i$ to $S_1$ and call \Subprotocol.
			\end{enumerate} 
	\end{enumerate}
\end{itemize}

We analyze the correctness of the protocol in the following lemma and the number of messages used in Lemma~\ref{le:denseMsg}.
We prove that OPT communicated at least once in Lemma~\ref{le:optOnce}.

\begin{lemma}
\label{le:denseCorrect}
The protocol \textsc{DenseProtocol} computes a correct output $\mathcal{F}(t')$ at any time $t'$.
\end{lemma}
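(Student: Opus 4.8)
The plan is to verify the correctness invariant for \Dense{} by induction over the sequence of filter-violations processed, showing that at the end of each round the broadcast filters constitute a valid set of filters (in the sense of Observation~\ref{le:apxFilterDef}) with respect to the output $\mathcal{F}(t')$ declared in step 2. The core claim to maintain is a three-part invariant: (i) every node in $V_1 \cup (S_1\setminus S_2)$ currently observes a value in $E(t')\cup A(t')$, i.e.\ at least $(1-\varepsilon)v_{\pi(k,t')}^{t'}$, and in fact all of $V_1$ lies in $E(t')$; (ii) every node in $V_3 \cup (S_2\setminus S_1)$ observes a value at most $\tfrac{1}{1-\varepsilon}$ times the $k$-th largest value, so it may legitimately be excluded; (iii) the interval endpoints satisfy $\ell_r$ being the midpoint of $L_r$ and $u_r = \tfrac{1}{1-\varepsilon}\ell_r$, and the $k$-th largest value at time $t'$ lies in $[\ell_r, u_r]$ whenever \Dense{} has not yet terminated or branched to \Scattered. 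Given these, the filter assignments in step~2 directly satisfy $\ell_i \geq (1-\varepsilon)u_j$ for $i\in\mathcal{F}(t')$, $j\notin\mathcal{F}(t')$, because the lower endpoint of every "inside" filter is $\ell_r$ (or $(1-\varepsilon)z \geq$ something comparable for $S_2$-nodes kept inside — but note $S_2$-nodes are excluded from the output) and the upper endpoint of every "outside" filter is $u_r = \tfrac{1}{1-\varepsilon}\ell_r$, giving exactly the required $\ell_r \geq (1-\varepsilon)u_r$.

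First I would handle the base case: at time $t$, after step~1, we have $V_1 = \{i : v_i^t > \tfrac{1}{1-\varepsilon}z\}$, $V_3 = \{i : v_i^t < (1-\varepsilon)z\}$, $S_1 = S_2 = \emptyset$, $L_0 = [(1-\varepsilon)z, z]$, so $\ell_0 = \tfrac{(2-\varepsilon)z}{2}$ and $u_0 = \tfrac{(2-\varepsilon)z}{2(1-\varepsilon)}$. One checks that $z\in[\ell_0,u_0]$ and that $V_1$ has at most $k$ nodes and $V_1\cup V_2$ has at least $k$ nodes (using $z = v_k^t = v_{k+1}^t$), so the output $V_1\cup(S_1\setminus S_2)$ plus padding from $V_2\setminus S_2$ is well-defined and of size exactly $k$; the filter endpoints then satisfy Observation~\ref{le:apxFilterDef} by the computation above. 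Then for the inductive step I would go case by case through the rules 3a--3e and their primed symmetric versions. The key observations: in rule~3b.1 and 3b'.1 we learn enough nodes lie above $u_r$ (resp.\ below $\ell_r$) that the $k$-th largest value must itself exceed $\ell_r$ (resp.\ be below $u_r$), which justifies discarding the lower (resp.\ upper) half of $L_r$ while preserving that the true $k$-th value stays in $L_{r+1}$; in rules 3c.1 and 3c'.1 a node in $S_1$ that violates from below has climbed above $\tfrac{1}{1-\varepsilon}z$, hence above $E(t')$'s threshold, so it genuinely belongs in $V_1$; the resets $S_2 := \emptyset$ (in 3a, 3b'.1) and $S_1 := \emptyset$ (in 3a', 3b.1) are needed because shrinking $L$ moves $\ell_r$ and $u_r$, invalidating the tentative membership of borderline nodes — I would argue that after such a reset the affected nodes are simply back in $V_2\setminus S$ with filters $[\ell_{r+1},u_{r+1}]$, which is consistent. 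For the calls to \Subprotocol{} (3c.2, 3c'.2) and to \Scattered{} (rule~3d) I would invoke their correctness as black boxes: \Subprotocol{} is assumed correct by hypothesis and returns either a halving of $L$ or a node-move, each of which I would argue maintains the invariant exactly as in the direct cases; \Scattered{} is correct by Theorem~\ref{th:scatteredCompetitive}, and rule~3d is entered precisely when the output has become unique (there are $\geq k$ nodes above $u_r$ and $\geq n-k$ nodes below $\ell_r$, forcing $|\mathcal{K}(t')|$ small), so handing off to the exact-style algorithm is sound.

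The main obstacle I expect is the bookkeeping around the sets $S_1, S_2$ and the possibility that a single node sits in $S_1\cap S_2$ (which is exactly when \Subprotocol{} is invoked): I need to check that the output $\mathcal{F}(t') = V_1\cup(S_1\setminus S_2)$ together with $k - |V_1\cup(S_1\setminus S_2)|$ padding nodes from $V_2\setminus S_2$ is always well-defined — i.e.\ that $|V_1\cup(S_1\setminus S_2)| \leq k$ and $|V_2\setminus S_2| \geq k - |V_1\cup(S_1\setminus S_2)|$ — at every intermediate point, including right after a node is added to $S_1$ or $S_2$ but before the round completes. This requires a counting argument: a node enters $S_1$ only via rule~3b.2, which is guarded by "at most $k$ nodes with values $> u_r$", and dually for $S_2$; I would show these guards, together with the invariant that $V_1$ nodes are exactly those known to be in $E(t')$, keep the cardinalities in range. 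The secondary subtlety is the "ignore stale violations" convention mentioned in the text: when several nodes violate simultaneously, processing one may redefine filters so another's violation is moot — I would note that this only ever strengthens our knowledge (a node whose violation became stale is now provably inside its new filter) and so cannot break the invariant, which lets the induction proceed one violation at a time. Finally, termination correctness: when $L_{r+1} = \emptyset$ the protocol ends (rule~3e); at that moment the output declared in the last executed round~2 is still valid for the current $t'$ by the invariant, so \Dense{} never outputs an incorrect $\mathcal{F}(t')$ during its run, which is all the lemma asserts.
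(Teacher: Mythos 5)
Your proposal is correct in substance and follows the same two-step structure as the paper's own proof: (a) the filters broadcast in step~2 are valid for the declared output because every ``inside'' filter has lower endpoint at least $\ell_r$ and every ``outside'' filter has upper endpoint at most $u_r=\frac{1}{1-\varepsilon}\ell_r$, which is exactly the overlap condition of Observation~\ref{le:apxFilterDef}; and (b) the output is well-defined because the only rules that shrink the pool $V_1\cup(S_1\setminus S_2)\cup(V_2\setminus S_2)$ are 3.c.1, 3.c.2 and 3.b'.2, and in the last of these the guard of 3.b'.1 would fire first if the pool were about to drop below $k$ --- the same counting argument you sketch. One caution: two of the sub-invariants you propose to maintain are strictly stronger than what actually holds. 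Neither $V_1\subseteq E(t')$ nor $v_{\pi(k,t')}^{t'}\in[\ell_r,u_r]$ is an invariant: a $V_1$-node sitting at value $\ell_r<u_r\leq\frac{1}{1-\varepsilon}v_{\pi(k,t')}^{t'}$ is not in $E(t')$, and if the output consists of $V_1$-nodes with filters $[\ell_r,\infty]$ the $k$-th largest value can exceed $u_r$ without any violation occurring. The facts that are maintained, and that your endpoint computation actually uses, are the weaker ones: $v_{\pi(k,t')}^{t'}\geq\ell_r$ and every non-output node has value at most $u_r$, which already imply that non-output nodes lie outside $E(t')$ and output nodes lie in $E(t')\cup A(t')$. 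With the invariant weakened accordingly, your induction goes through and coincides with the paper's argument.
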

\begin{proof} 
By definition the output consists of nodes from $V_1$, $S_1$ and (arbitrary) nodes from $V_2 \setminus S_2$ (cf.\ step 2.). 
Observe that by definition of the filters of the nodes in these subsets, the minimum of all lower endpoints of the filters is $\ell_r$ following the rules in step 2. 
Also observe that the maximum of all upper endpoints of the filters of the remaining nodes is $u_r$. 
Since by definition $u_r = \frac{1}{1-\varepsilon} \ell_r$ holds, the values observed by nodes $i \in \mathcal{F}_1$ are (lower) bounded by $\ell_r$ and nodes $i \in \mathcal{F}_2$ are (upper) bounded by $u_r$, thus the overlap of the filters is valid.

Now we argue that there are at least $k$ nodes in the set $V_1 \cup S_1 \cup V_2 \setminus S_2$.
To this end, assume to the contrary that $t'$ is the first time step at which strictly less than $k$ nodes are in the union of these sets. 
Now observe that the cases in the \Dense in which nodes are deleted from one of $V_1, S_1$ or $V_2 \setminus S_2$ are 3.c.1., 3.c.2., and 3.b'.2..

Observe that in step 3.c.1. the algorithm moves $i$ from $S_1$ and $V_2$ to $V_1$ and thus $i$ is again part of the output and does not change the cardinality.
In step 3.c.2. the node $i$ is added to $S_2$ and \Subprotocol is called afterwards.
At this time $t'$ node $i$ is (again) part of the output of \Subprotocol and thus there are sufficiently many nodes to choose as an output which is a contradiction to the assumption.
In the remaining case 3.b'.2. \Dense adds $i$ to $S_2$.
However, since at time $t'$ strictly less than $k$ nodes are in $V_1 \cup S_1 \cup (V_2 \setminus S_2)$, there are strictly more than $n-k$ nodes in $S_2 \cup V_3$ and thus, the algorithm would execute step 3.b'.1. instead.
This leads to a contradiction to the assumption.
By these arguments the correctness follows.
\end{proof}

\begin{lemma}
\label{le:denseMsg}
The protocol \textsc{DenseProtocol} uses at most $\bigO(k \log n + \sigma \log (\varepsilon v_k ) + (\sigma + \log(\varepsilon v_k))  \cdot SUB(\sigma, |L|))$ messages on expectation.
\end{lemma}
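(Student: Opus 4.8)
The plan is to account for messages by charging them to the three ways the algorithm can make progress: (i) the initial set-up and any re-invocations of \Scattered; (ii) halvings of the guess interval $L$, of which there are only $\bigO(\log|L_0|) = \bigO(\log(\varepsilon v_k))$ since $|L_0| = |[(1-\varepsilon)z,z]| = \varepsilon z \le \varepsilon v_k$ and each round at least halves $L$; and (iii) the monotone movement of nodes from $V_2$ into $V_1$ or $V_3$, of which there are at most $|V_2^t| \le \sigma$ since $V_2^t \subseteq \mathcal{K}(t)$ and $|V_2|$ only decreases. I would first observe that step~1 together with the clean-up described before the algorithm (handling the case $v_k^t \ne v_{k+1}^t$) costs $\bigO(k\log n)$ in expectation by Lemma~\ref{le:topk}; and that whenever \Dense falls back to \Scattered in step~3.d, that happens only when the data has become ``scattered'' so that the output is forced, and by Lemma~\ref{le:combined} one such call costs $\bigO(k\log n + \log\log\Delta + \log\frac1\varepsilon)$ — I would argue this fallback is triggered at most once (or a constant number of times), so it contributes only a lower-order additive term that is absorbed; it is exactly this that explains why the bound in the statement is phrased without a \Scattered term (it is dominated once $\sigma \ge k$, the regime where \Dense is used).

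Next I would bound the per-round cost. In a single round $r$, the server broadcasts the round's parameters and the filters derived from $\ell_r, u_r$ using $\bigO(1)$ broadcast messages, and then waits for a filter-violation. Reporting that some node violates its filter is handled via the \textsc{ExistenceProtocol} of Lemma~\ref{le:twoBuckets}/Corollary~\ref{co:validate} at expected cost $\bigO(1)$; the sub-cases 3.a,3.b.1,3.a',3.b'.1 additionally require the server to learn how many nodes currently exceed $u_r$ (resp.\ fall below $\ell_r$), which can again be done by a constant number of \textsc{ExistenceProtocol}-style invocations (test ``are there $>k$ such nodes'' by having the relevant nodes run \textsc{Existence}), so each such event is $\bigO(1)$ expected messages. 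Each round ends with one of the following progress events: a halving of $L$ (case a,a',b.1,b'.1), which happens $\bigO(\log(\varepsilon v_k))$ times total; a movement of a node into $V_1$ or $V_3$ (case c.1,c'.1), which happens $\le \sigma$ times total; or a \Subprotocol call (case c.2,c'.2), which happens whenever a node lands in $S_1 \cap S_2$. A \Subprotocol call costs $SUB(\sigma,|L|)$ (using $|V_2|\le\sigma$ as the relevant node count), and after it returns it either halves $L$ or moves a node out of $V_2$ — so \Subprotocol is itself called at most $\bigO(\sigma + \log(\varepsilon v_k))$ times. Summing: $\bigO(k\log n)$ for set-up, $\bigO(\log(\varepsilon v_k))$ for the halving rounds, $\bigO(\sigma)$ for the node-movement rounds (and $\bigO(\sigma\log(\varepsilon v_k))$ if one conservatively charges each of the $\sigma$ movements the full $\bigO(\log(\varepsilon v_k))$ broadcasts of filter parameters it sits inside, which matches the stated $\sigma\log(\varepsilon v_k)$ term), plus $\bigO((\sigma + \log(\varepsilon v_k)))\cdot SUB(\sigma,|L|)$ for the sub-protocol calls, giving exactly the claimed $\bigO(k\log n + \sigma\log(\varepsilon v_k) + (\sigma+\log(\varepsilon v_k))\cdot SUB(\sigma,|L|))$.

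The main obstacle I anticipate is bounding the number of ``wasted'' filter-violations — violations that, after the server reorganizes filters in step~3.e, turn out to be no longer relevant and are simply discarded — and making sure these cannot be generated adversarially more than $\bigO(\sigma + \log(\varepsilon v_k))$ times between genuine progress events. The key structural fact to nail down is that within a fixed round $r$ (fixed $\ell_r,u_r$), the sets $S_1,S_2$ only grow, $V_1,V_3$ only grow, and each distinct node can contribute at most a constant number of ``new'' violation events to that round before it is either absorbed into $S_1/S_2$ (a state it cannot leave within the round without triggering a node-movement or a halving) or causes the round to end; combined with $|V_2|\le\sigma$ this caps the per-round violation count at $\bigO(\sigma)$, and the $\bigO(1)$-expected cost of each violation report via \textsc{ExistenceProtocol} then closes the argument. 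I would also need to double-check that a node ejected from $V_2$ into $V_1$ or $V_3$ genuinely never returns to $V_2$ — which follows from inspecting that no rule ever adds elements to $V_2$ — so that the total node-movement count is $\le \sigma$ across the entire run and not merely per round.
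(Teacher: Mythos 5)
Your proposal is correct and follows essentially the same accounting as the paper: $\bigO(k\log n)$ for the set-up, at most $\bigO(\log(\varepsilon v_k))$ rounds since $L$ is halved each round, at most $\bigO(\sigma)$ filter-violations per round because each node in $V_2$ contributes only constantly many before being absorbed into $S_1$, $S_2$, $V_1$ or $V_3$, and at most $\sigma + \log(\varepsilon v_k)$ calls of \Subprotocol since each call either halves $L$ or moves a node out of $V_2$. Your additional remarks (handling violation reports via the \textsc{ExistenceProtocol} and accounting for the \Scattered fallback) are refinements the paper's proof leaves implicit, but they do not change the argument.
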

\begin{proof} 
Initially the algorithm computes the top-$k$ set and probes all nodes which are in the $\varepsilon$-neighborhood of the node observing the $k$-th largest value, using $\bigO(k \log n + \sigma)$ messages on expectation.

During each round $r$ each node can only violate its filter at most constant times without starting the next round $r+1$ or leading to a call of \Subprotocol based on the following simple arguments: 
All nodes $i$ in $V_1$ or $V_3$ directly start the next round $r+1$ after a filter-violation. 
Now fix a node $i \in V_2$ and observe that if it is not contained in $S_1$ and $S_2$ it is added to $S_1$ if a filter-violation from below or to $S_2$ if a filter-violation from above is observed. 
At the time this node $i$ observes a filter-violation in the same direction (i.e.\ from below if it is in $S_1$ and from above if it is in $S_2$) it is added to $V_1$ or $V_3$. 
In these cases the next filter-violation will start the next round.
The last case that remains is that it is added to both sets, $S_1$ and $S_2$.
Observe that the \Subprotocol is called and starts the next round or decides on one node (which may be different from the fixed node $i$) to be moved to $V_1$ or $V_3$.

Observe that at most $\sigma + 1$ nodes can perform filter-violations without starting the next round since each node from $V_1$ or $V_3$ directly starts the next round and the number of nodes in $V_2$ is bounded by $\sigma$.
Furthermore observe that after each round the interval $L$ is halved thus, after at most $\log |L_0| + 1$ rounds the set $L_r$ is empty. 

Now focus on the \Subprotocol which also halves $L$ after termination or decides on one node $i \in V_2^{t'}$ to be moved to $V_1^{t'+1}$ or $V_3^{t'+1}$.
Thus, it can be called at most $\sigma + \log (\varepsilon v_k)$ times, leading to the result as stated above.
\end{proof}

\paragraph*{The \Subprotocol}
\ifConferenceVersion
We omit the \Subprotocol and the proofs due to space constraints which can be found in the full version.
However, the protocol and the proofs are based on similar structure and arguments.
\else
We propose an algorithm which is dedicated for the case in the execution of \Dense that one node $i$ was added to $S_1$ and to $S_2$.

$ \in V_2 \setminus S$ reported a filter-violation from below and from above and thus gets added to $S_1$ and to $S_2$ (in an arbitrary order). 
In detail, it has observed a value which is larger than $u_r$ and a value which is smaller than $\ell_r$. 
As a short remark, if $i \in \mathcal{F}^*$ would hold, then $\ell^* \leq \ell_r$ follows and on the other hand if $i \notin \mathcal{F}^*$ holds, then $\ell^* \geq \ell_r$ follows, but in \Dense cannot decide $i \in \mathcal{F}^*$ in steps 3.c.2. or 3.c'.2.

\paragraph*{Algorithm: \Subprotocol}~\newline 
\vspace{-0.3cm}
\begin{itemize}
\item[1.] Define an interval $L'_0 \coloneqq L_r \cap [(1-\varepsilon)z, \ell_r]$, $S_1' \coloneqq S_1$, and $S_2' \coloneqq \emptyset$.
Set $r' \coloneqq 0$ indicating the round.

\item[2.]  The following \textbf{rules} are applied for (some) round $r'$: \\
Let $\ell'_r$ be the midpoint of $L'_{r'}$ and $u'_{r'} \coloneqq \frac{1}{1-\varepsilon} \ell'_{r'}$. \newline
For a node $i$ the filter is defined as follows: \\
If $i \in V_1$, $F'_i \coloneqq F_i$; \\
If $i \in V_2 \cap (S_1' \setminus S_2')$, $F_i' \coloneqq [\ell_r, \frac{1}{1-\varepsilon}z]$. \\
If $i \in V_2 \cap S_1' \cap S_2'$, $F'_i \coloneqq [\ell_{r'}', \frac{1}{1-\varepsilon} z]$; \\
if $i \in V_2 \setminus S'$, $F'_i \coloneqq [\ell_r, u'_{r'}]$.\\
if $i \in V_2 \cap (S_2' \setminus S_1')$, $F'_i \coloneqq [(1-\varepsilon)z, u'_{r'}]$; \\
if $i \in V_3$, $F'_i \coloneqq [0, u'_{r'}]$;

The output $\mathcal{F}(t)$ is defined as $V_1 \cup (S'_1 \setminus S'_2) \cup (S'_1 \cap S'_2)$ and sufficiently many nodes from $V_2 \setminus S'_2$.

\item[3.] Wait until time $t'$, at which node $i$ reports a filter-violation:
\begin{enumerate}
	\item[a.] \textbf{If} $i \in V_1$, 
	\textbf{then} terminate \Subprotocol and set $L_{r+1}$ to be the lower half of $L_r$.

	\item[b.] \textbf{If} $i \in (V_2 \setminus S')$ violates its filter from below 
	\begin{enumerate}
		\item[b.1.] \textbf{If} the server observed strictly more than $k$ nodes with larger values than $u_r$ \textbf{then}
		\begin{itemize}
			\item set $L'_{{r'}+1}$ to be the upper half of $L'_{r'}$ and redefine $S'_1 \coloneqq S_1$. 
			\item \textbf{If} $L'_{r'+1}$ is defined to the empty set \textbf{then} terminate \Subprotocol and define the last node $i$ which was in $S'_1 \cap S'_2$ and observed a filter-violation from above to be moved to $V_3$. 
			If such a node does not exist the node $i \in S_1 \cap S_2$ moves to $V_3$.			
		\end{itemize}

		\item[b.2.] \textbf{Else} add $i$ to $S'_1$.
	\end{enumerate}
	
	\item[c.] \textbf{If} $i \in S'_1 \setminus S'_2$ violates its filter
	\begin{enumerate}
		\item[c.1.] \textbf{If} $i$ violates its filter from below \textbf{then} move $i$ from $V_2$ and $S'_1$ to $V_1$.
		\item[c.2.] \textbf{Else} add $i$ to $S'_2$ and update $i$'th filter.
	\end{enumerate}
	
	\item[d.] \textbf{If} $i \in S'_1 \cap S'_2$ violates its filter
	\begin{enumerate}
		\item[d.1.] \textbf{If} $i$ violates from below \textbf{then} move $i$ to $V_1$ terminate the \Subprotocol.
		\item[d.2.] \textbf{else} 
		\begin{itemize}
			\item define $L'_{r'+1}$ to be the lower half of $L'_{r'}$ and redefine $S'_2 \coloneqq \emptyset$.
			\item \textbf{If} $L'_{r'+1}$ is defined to be the empty set \textbf{then} terminate \Subprotocol and move $i$ to $V_3$.
		\end{itemize}
	\end{enumerate} 
	
	\item[e.] \textbf{If} the server observed $k$ nodes with values $v_i > u_r$ and $n-k$ nodes with values $v_i < \ell_r$ \textbf{then} call \Scattered
		
	\item[f.] \textbf{If} $L'_{r'+1}$ was set increment $r'$, update $u'_{r'}$, $\ell'_{r'}$, all filters using the rules in 2., and goto step 3.   
	\\
		
	--- And their symmetric cases ---
	\item[a'.] \textbf{If} $i \in V_3$, \textbf{then} 
	\begin{itemize}
		\item set $L'_{r'+1}$ to be the upper half of $L'_{r'}$ and redefine $S'_1 \coloneqq S_1$.
		\item \textbf{If} $L'_{r'+1}$ is defined to the empty set \textbf{then} terminate \Subprotocol and define the last node $i$ which was in $S'_1 \cap S'_2$ and observed a filter-violation from above to be moved to $V_3$. 
		If such a node does not exist the node $i \in S_1 \cap S_2$ moves to $V_3$.
	\end{itemize}
		
	\item[b'.] \textbf{If} $i \in (V_2 \setminus S')$ violates its filter from above 
	\begin{enumerate}
		\item[b'.1.] \textbf{If} the server observed strictly more than $n - k$ nodes with a value less than $\ell_r$, \textbf{then} terminate \Subprotocol and set $L_{r+1}$ to be the lower half of $L_r$.
		\item[b'.2.] \textbf{else} add $i$ to $S'_2$.
	\end{enumerate}
	
	\item[c'.] \textbf{If} $i \in S'_2 \setminus S'_1$ 
	\begin{enumerate}
		\item[c'.1.] \textbf{If} $i$ violates its filter from above \textbf{then} move $i$ from $V_2$ and $S'_2$ to $V_3$.
		\item[c'.2.] \textbf{else} add $i$ to $S'_1$ and update $i$'th filter.
	\end{enumerate}
\end{enumerate}
\end{itemize}

\fi

\begin{lemma}
\label{le:subCorrect}
The protocol \Subprotocol computes a correct output $\mathcal{F}(t')$ at any time $t'$ at which a node $i \in S_1 \cap S_2$ exists.
\end{lemma}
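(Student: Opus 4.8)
\textbf{Proof proposal for Lemma~\ref{le:subCorrect}.}

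The plan is to mirror the structure of the correctness proof of \Dense (Lemma~\ref{le:denseCorrect}), adapting it to the refined partition used by \Subprotocol. First I would verify that the \emph{filter overlap} is always valid: by inspecting the rules in step~2.\ of \Subprotocol, the minimum over all lower endpoints of the filters of nodes chosen for the output (those in $V_1$, in $S'_1 \setminus S'_2$, in $S'_1 \cap S'_2$, and the arbitrary nodes picked from $V_2 \setminus S'_2$) equals $\ell'_{r'}$ — except possibly the contributions of $S'_1 \setminus S'_2$ nodes which have lower endpoint $\ell_r \geq \ell'_{r'}$ because $L'_{r'} \subseteq [(1-\varepsilon)z, \ell_r]$. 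Symmetrically, the maximum over all upper endpoints among the remaining nodes is $u'_{r'}$ (nodes of $S'_2\setminus S'_1$ have upper endpoint $u'_{r'}$, nodes of $V_2\setminus S'$ have $u'_{r'}$, $V_3$ has $u'_{r'}$). Since $u'_{r'} = \frac{1}{1-\varepsilon}\ell'_{r'}$ by definition, Observation~\ref{le:apxFilterDef} gives a valid set of filters, so as long as all nodes remain inside their intervals the output need not change.

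Next I would argue that the output always has size exactly $k$, i.e.\ that $|V_1 \cup S'_1 \cup (V_2\setminus S'_2)| \geq k$ at every time, so that enough nodes are available. As in Lemma~\ref{le:denseCorrect} this is done by contradiction: assume $t'$ is the first time this union drops below $k$, then identify the rules of \Subprotocol that can delete a node from it (namely 3.c.1, 3.c.2, 3.d.2, and 3.b'.2, together with the termination clauses in 3.b.1 / 3.a' that move a node to $V_3$). For 3.c.1 the node moves from $S'_1$ to $V_1$, so it stays in the union and cardinality is unchanged; for 3.c.2 the node is added to $S'_2$ but stays in $S'_1$, hence still in the union; for 3.d.2 the node is in $S'_1\cap S'_2$ and is \emph{not} deleted from $V_2$ (only $S'_2$ is reset), so it remains in the union. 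The only genuinely dangerous case is 3.b'.2 (adding $i$ to $S'_2$): here I would use the counting argument of Lemma~\ref{le:denseCorrect} — if strictly fewer than $k$ nodes are in $V_1\cup S'_1\cup(V_2\setminus S'_2)$ then strictly more than $n-k$ nodes are in $S'_2\cup V_3$, so the server has already observed $>n-k$ nodes below $\ell_r$ and step 3.b'.1 (terminate, halve $L$) would have been executed instead of 3.b'.2, a contradiction. The termination clauses moving a node to $V_3$ require the same treatment: when $L'_{r'+1}$ becomes empty the protocol hands back control and moves the last $S'_1\cap S'_2$ node; I would observe that at that moment the complementary count forces $\geq k$ nodes to remain in the output union, exactly as above.

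The main obstacle I expect is bookkeeping the set $S$-membership invariants carefully enough that every deletion from the output union is accounted for — in particular, making sure that the "last node which was in $S'_1\cap S'_2$ and observed a filter-violation from above" referenced in the termination clauses of 3.b.1 and 3.a' is well-defined, and that moving it to $V_3$ does not violate size $k$. I would handle this by noting that whenever \Subprotocol is entered there is at least one node in $S_1\cap S_2$ (the hypothesis of the lemma), that this node is always a valid fallback for the termination clause, and that the same complementary-counting argument bounds $|S'_2\cup V_3|$ from above whenever the output union would be too small. The correctness of the (re)initialization — $S'_1 \coloneqq S_1$, $S'_2 \coloneqq \emptyset$, $L'_0 \coloneqq L_r \cap [(1-\varepsilon)z,\ell_r]$ — is inherited from the state of \Dense at the calling time and needs only the remark that the node triggering the call lies in $S_1\cap S_2$ and is therefore in the output by the step~2.\ definition, so no time step is left without a correct output.
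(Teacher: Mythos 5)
Your proposal is correct and follows essentially the same two-step structure as the paper's own proof: first establishing validity of the filter overlap via $u'_{r'} = \frac{1}{1-\varepsilon}\ell'_{r'}$, then arguing by contradiction that the output union cannot drop below $k$ nodes because otherwise strictly more than $n-k$ nodes would have been observed below the lower threshold and case b'.1 would have been executed instead. Your treatment is somewhat more detailed in enumerating the individual deletion rules and the termination clauses, but the underlying argument is the same.
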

\ifConferenceVersion
\else
\begin{proof}
By definition the output consists of nodes from $V_1$, $S'_1 \setminus S_2'$, $S'_1 \cap S'_2$ and (arbitrary) nodes from $V_2 \setminus S'_2$ (cf.\ step 2.). 
Observe that by definition of the filters of the nodes in these subsets, the minimum of all lower endpoints of the filters is $\ell'_{r'}$ (in case the node is in $S_1$ and in $S_2$) following the rules in step 2. 
Also observe that the maximum of all upper endpoints of the filters of the remaining nodes (in subsets $V_2 \setminus S'$, $S'_2 \setminus S'_1$ or $V_3$) is $u'_{r'}$. 
Since by definition $u'_{r'} = \frac{1}{1-\varepsilon} \ell'_{r'}$ holds, the values observed by nodes $i \in \mathcal{F}_1$ are (lower) bounded by $\ell'_{r'}$ and nodes $i \in \mathcal{F}_2$ are (upper) bounded by $u'_{r'}$ thus, the overlap of the filters is valid.

Now we argue that there are at least $k$ nodes in the sets $V_1$, $S_1 \setminus S_2$, $S_1 \cap S_2$, and $V_2 \setminus S_2$.
To this end, simply assume to the contrary that at a time $t'$ there are strictly less than $k$ nodes in the union of the sets. 
It follows that at this time $t'$, the algorithm has observed that there are strictly more than $n-k$ nodes with a value smaller than $\ell'_{r'}$.
Thus, the algorithm would continue (compare case b'.1.) with a lower value of $\ell_r$ or, in case the interval $L_r$ is empty, terminates (which is a contradiction). 

By these arguments the correctness follows.	
\end{proof}
\fi

\begin{lemma}
\label{le:subMsg}
The protocol \Subprotocol uses at most $\bigO(\sigma \log |L|)$ messages on expectation.
\end{lemma}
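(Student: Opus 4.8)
The plan is to mirror the accounting scheme already used for \Dense in the proof of Lemma~\ref{le:denseMsg}, but now applied to the inner round structure of \Subprotocol. First I would record that \Subprotocol proceeds in rounds $r' = 0, 1, \ldots$, and that a round is incremented only in cases where the associated interval $L'_{r'}$ is replaced by one of its two halves (cases b.1, d.2, a$'$ and their symmetric counterparts). Since the starting interval satisfies $L'_0 = L_r \cap [(1-\varepsilon)z, \ell_r] \subseteq L_r$ and $|L_r| \le |L| \le |L_0| = \varepsilon z$, after at most $\log |L| + 1$ halvings the interval becomes empty and \Subprotocol stops. Hence the number of rounds is $\bigO(\log |L|)$, and the whole task reduces to bounding by $\bigO(\sigma)$ the expected number of messages charged to a single round.

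Second, I would prove the per-round bound by the same case analysis on the set membership of a violating node as for \Dense. A node in $V_1$ or $V_3$ that reports a violation immediately terminates \Subprotocol or triggers a round increment (cases a, a$'$); a node in $V_2 \setminus S'$ is either added to $S'_1$ or $S'_2$ (cases b.2, b$'$.2), or its violation makes the server witness more than $k$ large (resp.\ more than $n-k$ small) values, forcing a round increment (cases b.1, b$'$.1); a node in $S'_1 \setminus S'_2$ (resp.\ $S'_2 \setminus S'_1$) either leaves $V_2$ by moving to $V_1$ (resp.\ $V_3$) or is promoted into $S'_1 \cap S'_2$ (cases c.1/c.2, c$'$.1/c$'$.2); and a node in $S'_1 \cap S'_2$ either moves to $V_1$ and terminates \Subprotocol or causes $L'_{r'}$ to be halved (cases d.1, d.2). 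Thus each of the at most $|V_2| \le \sigma$ nodes currently in $V_2$ can cause only $\bigO(1)$ reported violations before the round ends, \Subprotocol terminates, or that node leaves $V_2$ for good; and a node leaving $V_2$ is a global event occurring at most $\sigma$ times over the entire run of \Subprotocol. Each reported violation is carried out through the \textsc{ExistenceProtocol} exactly as in Corollary~\ref{co:validate}, costing $\bigO(1)$ messages on expectation, while each round increment (and the initial setup in step 1) needs only one broadcast of the new parameter $\ell'_{r'}$, from which every node recomputes its own filter via the rules in step 2 --- again $\bigO(1)$ messages.

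Putting the two parts together, one round costs $\bigO(\sigma)$ messages in expectation, there are $\bigO(\log|L|)$ rounds, and the at most $\bigO(\sigma)$ global ``node leaves $V_2$'' events are subsumed, giving the claimed $\bigO(\sigma \log|L|)$ bound.

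The main obstacle I anticipate is the bookkeeping in the per-round argument. Specifically, on a round increment the sets are reset (via $S'_1 \coloneqq S_1$ in cases b.1, a$'$ and $S'_2 \coloneqq \emptyset$ in case d.2), so a node may re-enter $V_2 \setminus S'$ and generate fresh violations; I would need to argue that this still yields only $\bigO(1)$ violations per node \emph{per round}, so that charging $\bigO(\sigma)$ to each round is legitimate --- this is where invoking Lemma~\ref{le:subCorrect} is useful, since it guarantees the ``more than $k$''/``more than $n-k$'' tests are reached before too many $V_2$-nodes pile up in $S'_1 \cap S'_2$. A second point to handle carefully is the escape to \Scattered in cases e and e$'$: this happens only when the output has become unique, at which point \Dense is restarted, so the cost of that call is accounted for in the overall analysis rather than charged against \Subprotocol here.
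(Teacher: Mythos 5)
Your proposal follows essentially the same route as the paper's proof: a per-round bound of $\bigO(\sigma)$ expected messages obtained by showing each node in $V_2$ triggers only $\bigO(1)$ violations before the round advances, the node leaves $V_2$, or the protocol terminates (with $V_1$/$V_3$ violations advancing the round immediately), multiplied by the $\bigO(\log |L|)$ bound on the number of rounds coming from halving $L'_{r'} \subseteq L_r$ until it is empty. The extra bookkeeping you flag (set resets on round increments, the escape to \Scattered) is handled implicitly in the paper the same way you propose, so the argument is sound and matches.
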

\ifConferenceVersion
\else
\begin{proof}
During each round $r'$ each node can only violate its filter at most constant times without starting the next round $r'+1$ based on the following simple arguments: 
All nodes $i$ in $V_1$ or $V_3$ directly start the next round $r'+1$ after a filter-violation. 
Now fix a node $i \in V_2$ and observe that if it is not contained in $S'_1$ and $S'_2$ it is added to $S'_1$ if a filter-violation from below or to $S'_2$ if a filter-violation from above is observed. 
At the time this node $i$ observes a filter-violation in the same direction (i.e.\ from below if it is in $S'_1$ and from above if it is in $S'_2$) it is added to $V_1$ or $V_3$. 
In these cases the next filter-violation will start the next round.
The last case that remains is that it is added to both sets, $S'_1$ and $S'_2$.
Observe that \Scattered terminates if $i \in S'_1 \cap S'_2$ violates its filter from below (and moves $i to V_1$).
Otherwise $i$ violates its filter from above \Subprotocol starts the next round $r'+1$.

Observe that at most $\sigma + 1$ nodes can perform filter-violations without starting the next round since each node from $V_1$ or $V_3$ directly starts the next round ($r + 1$ from the \Dense or $r'+1$ this protocol) and the number of nodes in $V_2$ is bounded by $\sigma$.

Furthermore observe that after each round the interval $L'$, the guess of OPTs lower endpoint of the upper filter, is halved. 
The range of $L'$ is upper bounded by the range of $L$ thus, after at most $\log  |L| + 1$ rounds the set $L'$ is empty.
\end{proof}
\fi

\begin{lemma}
\label{le:subprotocol}
Given a time point $t$ at which \Subprotocol is started.
At the time $t'$ which \Subprotocol terminates, there is one node $i$ that is moved from $V_2$ to $V_1$ or $V_3$ or the interval $L_r$ (from \Dense) is halved correctly.
\end{lemma}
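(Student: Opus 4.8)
The plan is to analyze the termination conditions of \Subprotocol and, in each case, verify that the stated invariant is preserved and the claimed progress is made. Recall the high-level setup: \Subprotocol is invoked from \Dense when some node $i \in V_2 \setminus S$ has observed both a value $> u_r$ (so it was added to $S_1$) and a value $< \ell_r$ (so it was added to $S_2$), witnessing that neither $\ell^* \le \ell_r$ nor $\ell^* \ge \ell_r$ can be ruled out directly. \Subprotocol maintains its own guess interval $L' \subseteq L_r \cap [(1-\varepsilon)z, \ell_r]$ (the ``left half'' of $L_r$), halved each round, together with the invariant $\ell^* \in L^* \subseteq L'$ whenever $L'$ is nonempty and OPT did not communicate during $[t,t']$. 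I would first restate this invariant and note that it holds at initialization because $\ell_r$ is the midpoint of $L_r$ and, by the reasoning in Theorem~\ref{th:scatteredCompetitive}, each filter-violation observed while running the protocol shrinks the admissible interval for $\ell^*$ in the correct direction.

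Next I would go through the termination branches one at a time. \textbf{(i)} If $i \in V_1$ (case a.) or $i \in V_3$ (case a'.) reports a violation and the resulting $L'_{r'+1}$ is nonempty, the protocol terminates and \Dense halves $L_r$; here a node known to be in $\mathcal{F}^*$ dropping below $\ell_r$ (resp.\ a node known to be outside rising above $u_r$) forces $\ell^* \le \ell_r$ (resp.\ $\ell^* > \ell_r$), so halving $L_r$ to the appropriate side is correct and preserves $\ell^* \in L^*$. \textbf{(ii)} If a node $i \in S'_1 \cap S'_2$ violates its filter from below (cases d.1. / the symmetric one), it is moved to $V_1$ and \Subprotocol terminates; the argument is that having observed this node both above $u_{r'}'$-type and below values, and now again from below, it is forced into $\mathcal{F}^*$ under the no-communication assumption — this is where one uses that $u'_{r'} = \frac{1}{1-\varepsilon}\ell'_{r'}$ together with the counting of how many nodes exceed $u_r$. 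Symmetrically for d.2. when $L'_{r'+1}$ becomes empty: the node moves to $V_3$, and emptiness of $L'$ certifies $\ell^*$ cannot lie in the left half, which combined with the \Dense-level bookkeeping moves the node out. \textbf{(iii)} The branches b.1. and b'.1. where the server has counted strictly more than $k$ (resp.\ $n-k$) nodes beyond a threshold: here if $L'$ stays nonempty the protocol terminates having halved $L_r$ in \Dense; if $L'$ empties, it terminates moving the last recorded $S'_1 \cap S'_2$ node (or the triggering node) to $V_3$. In each sub-case I would check that the combinatorial count genuinely rules out the corresponding half of the interval, so the invariant survives and the progress claim (either $L_r$ halved, or one $V_2$-node relocated) holds.

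The main obstacle I anticipate is branch \textbf{(ii)/(iii)} — showing that moving a node from $V_2$ to $V_1$ or $V_3$ is \emph{justified}, i.e.\ that under the hypothesis ``OPT did not communicate on $[t,t']$'' the relocated node really must (or must not) be in $\mathcal{F}^*$. This requires carefully combining three facts: the filter-validity constraint $\ell_i \ge (1-\varepsilon) u_j$ from Observation~\ref{le:apxFilterDef} applied to OPT's two filters; Lemma~\ref{le:filterChange} giving $\MIN_{\mathcal{F}^*_1}(t,t') \ge (1-\varepsilon)\MAX_{\mathcal{F}^*_2}(t,t')$; and the exact count of how many nodes have been seen above/below the current thresholds $u_r, \ell_r, u'_{r'}, \ell'_{r'}$. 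The subtlety is that the node being relocated need not be the node $i$ that triggered the call (the protocol may relocate ``the last node which was in $S'_1 \cap S'_2$''), so one must track that \emph{some} such witness node always exists when the branch fires and that its observed values, together with $|L'| \to \emptyset$, pin down $\ell^*$ relative to $\ell_r$. Once that case is settled, the remaining branches are routine: either $L'$ (and hence, after termination, $L_r$) was halved — immediate from the update rules in step 2./3.f. — or a node left $V_2$, and in all cases the invariant $\ell^* \in L^* \subseteq L_r$ (or, upon termination signalling emptiness, $L^* = \emptyset$ contradicting non-communication) is maintained. I would close by noting that correctness of the output during the run is already handled by Lemma~\ref{le:subCorrect}, so only the termination-time progress and invariant maintenance need to be argued here.
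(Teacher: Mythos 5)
Your overall strategy---a branch-by-branch analysis of \Subprotocol's termination conditions, showing in each case either that $L_r$ is halved to the side consistent with $\ell^*$ or that the relocated node provably must (or must not) belong to $\mathcal{F}^*$---is exactly the paper's approach, and you correctly isolate the hardest point: justifying the relocation of a node that need not be the one that triggered the call, via the overlap condition between OPT's two filters and the count of nodes observed above/below the current thresholds. Two concrete problems would nevertheless derail the write-up as planned.

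First, you misread the control flow. Cases a'.\ and b.1.\ do \emph{not} terminate and halve $L_r$ when $L'_{r'+1}$ is nonempty; they halve the \emph{inner} interval $L'$ to its upper half and continue, and they terminate only when $L'$ becomes empty, in which case they relocate a node to $V_3$ rather than halving $L_r$. Only cases a.\ and b'.1.\ terminate with $L_{r+1}$ set to the lower half of $L_r$ (and case a.\ does so unconditionally, with no reference to $L'$ at all). Executing your items (i) and (iii) as written would therefore prove the wrong outcome for a'.\ and b.1. Second, the invariant you propose to carry, $\ell^*\in L^*\subseteq L'$, is false at initialization: $L'_0=L_r\cap[(1-\varepsilon)z,\ell_r]$ is only the \emph{lower half} of $L_r$, and the entire reason \Subprotocol is invoked is that one cannot yet tell whether $\ell^*$ lies below or above $\ell_r$. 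The invariant must be made conditional (``if $\ell^*\le\ell_r$ then $\ell^*\in L'$''), or---as the paper does---each branch must be argued directly: e.g.\ in b.1., strictly more than $k$ nodes above $u'_{r'}$ forces $u^*\ge u'_{r'}$ for every admissible $\mathcal{F}^*$, hence $\ell^*\ge(1-\varepsilon)u'_{r'}=\ell'_{r'}$, which both justifies moving $L'$ to its upper half and, once $L'$ is empty, shows that the witness node (which observed a value at most $\ell'_{r'}$, taken to be the last node in $S'_1\cap S'_2$ with a violation from above, or else the node that initiated the call) lies below OPT's upper filter and so cannot be in $\mathcal{F}^*$. With these two corrections your plan coincides with the paper's proof.
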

\ifConferenceVersion
\else
\begin{proof}
Focus on the cases in which $L'$ is halved or there is a decision on a node $i$ to move to $V_1$ or $V_3$ (cf. cases 3.b.1., 3.d.1. 3.d.2., 3.a'., and 3.c'.1.).

In step 3.b.1. the server observed at the time $t'$ a filter-violation from $i \in V_2 \setminus S'$ and there are (strictly) more than $k$ nodes observed with a larger value than $u'_{r'}$. 
Observe that in this case for all subsets $\mathcal{S}$ with $k$ elements there exists one node $i \notin \mathcal{S}$ which observed a value $v_i \geq u'_{r'}$, thus no matter which set is chosen by OPT, for the upper bound $u^*$ for nodes $i \notin \mathcal{F}^*$ it holds: $u^* \geq u'_{r'}$, and since $u'_{r'} = \frac{1}{1-\varepsilon} \ell'_{r'}$ holds, it follows $\ell^* \geq \ell'_{r'}$.
Furthermore if $L'_{r'+1}$ was defined as the empty set, and a node $i \in S'_1 \cap S'_2$ exists, observe that $i$ gets a value $v_i \leq \ell'_{r'}$ and since in this case $u^* \geq u'_{r'}$ holds, $i \notin \mathcal{F}^*$ follows.
If such a node $i$ does not exist during the execution of \Subprotocol, the node $i \in S_1 \cap S_2$ which initiated the \Subprotocol can be decided to move to $V_3$ since during the execution of \Subprotocol the interval $L'$ is only halved to the upper half, thus $i \in S_1 \cap S_2$ observed a value $v_i < \ell_r = \ell'_{r'}$ and since $u^* \geq u'_{r'}$ holds, this $i \notin \mathcal{F}^*$ follows.

In step 3.d.1. the node $i$ observed a value $v_i$ which is larger than $\frac{1}{1-\varepsilon} z$ and thus has to be part of $\mathcal{F}^*$. 

In step 3.d.2. the node $i$ observed a value $v_i < \ell'_{r'}$. 
If during the execution of \Subprotocol the set $L'$ was defined as the upper half at least once then there was a node $j \in V_3$ or strictly more than $k$ nodes which observed a larger value than $u'_{r'}$. 
It follows, that this $i$ cannot be part of $\mathcal{F}^*$. 
In case during the execution of \Subprotocol the set $L'$ is alway defined to the lower half, then $\ell'_{r'}$ is the lower end of $L$ and since node $i$ observed a value strictly smaller than $\ell'_{r'}$ it cannot be part of $\mathcal{F}^*$.

The arguments for case 3.a'. are similar to 3.b.1.

For the remaining case 3.c'.1. simply observe that $i$ observed a smaller value than $(1-\varepsilon)z$ thus $i$ cannot be part of $\mathcal{F}^*$ follows.

First, focus on the steps in which $L$ is halved and observe that steps 3.a. and 3.b'.1. are the same cases as in the \Dense.
\end{proof}
\fi

\begin{lemma}
\label{le:optOnce}
Given a time point $t$ at which \Dense is started.
Let $t'$ be the time point at which \Dense terminates. 
During the time interval $[t, t']$ OPT communicated at least once.
\end{lemma}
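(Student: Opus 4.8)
The plan is a standard adversary-invariant argument. I would assume for contradiction that OPT stays silent on $[t,t']$; by Proposition~\ref{ob:twofilters} I may then fix OPT's output $\mathcal{F}^*$ and its two filters $F_1^*=[\ell^*,\infty]$, $F_2^*=[0,u^*]$, which by Observation~\ref{le:apxFilterDef} satisfy $\ell^*\ge(1-\varepsilon)u^*$, while feasibility forces $\ell^*\le\MIN_{\mathcal{F}^*_1}(t,t')$ and $u^*\ge\MAX_{\mathcal{F}^*_2}(t,t')$. I would first establish the base case: following the algorithm's convention $z=v_k^t=v_{k+1}^t$ (otherwise either OPT communicated during the $\bigO(k\log n)$ preprocessing, or the argument restarts at the first violation), at time $t$ at most $k-1$ nodes observe a value $>z$ and at least $k+1$ observe a value $\ge z$; since $|\mathcal{F}^*|=k$, some node of $\mathcal{F}^*$ has value $\le z$ and some node outside $\mathcal{F}^*$ has value $\ge z$, hence $\ell^*\le z\le u^*$, and combined with $\ell^*\ge(1-\varepsilon)u^*$ this yields $\ell^*\in[(1-\varepsilon)z,z]=L_0$, together with $V_1^t\subseteq\mathcal{F}^*$ and $V_3^t\cap\mathcal{F}^*=\emptyset$ by the same inequalities.

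The heart of the proof is to maintain, as $r$ increases over $[t,t']$, the invariant \textup{(i)}~$\ell^*\in L_r$, \textup{(ii)}~$V_1\subseteq\mathcal{F}^*$, \textup{(iii)}~$V_3\cap\mathcal{F}^*=\emptyset$. I would verify it rule by rule for every step of \Dense that touches $L_r$, $V_1$ or $V_3$. In 3.a a node $i\in V_1\subseteq\mathcal{F}^*$ violates $[\ell_r,\infty]$ from above, so $\ell^*\le v_i<\ell_r$ puts $\ell^*$ in the lower half of $L_r$; in 3.b.1 strictly more than $k$ nodes have been observed above $u_r$, so some node outside $\mathcal{F}^*$ exceeds $u_r$, giving $u^*>u_r$ and $\ell^*\ge(1-\varepsilon)u^*>(1-\varepsilon)u_r=\ell_r$, i.e.\ the upper half; in 3.c.1 a node $i\in S_1\setminus S_2$ violating $[\ell_r,\tfrac{1}{1-\varepsilon}z]$ from below has $v_i>\tfrac{1}{1-\varepsilon}z$, so $i\notin\mathcal{F}^*$ would force $u^*>\tfrac{1}{1-\varepsilon}z$ and $\ell^*>z$, contradicting $\ell^*\le z$, hence $i\in\mathcal{F}^*$ and moving it to $V_1$ is legitimate. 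The symmetric cases 3.a', 3.b'.1, 3.c'.1 and the analogous ones inside \Subprotocol are handled identically, and Lemma~\ref{le:subprotocol} already certifies that each call of \Subprotocol either halves $L_r$ so that (i) survives or moves a $V_2$-node into $V_1$/$V_3$ consistently with (ii)/(iii); the $S_1$/$S_2$ resets leave $L_r$, $V_1$, $V_3$ untouched and are harmless. The one rule that does not halve $L_r$ is the hand-off to \Scattered in 3.d, where the recorded pattern of $k$ heavy and $n-k$ light values either forces OPT to choose the unique output so that Theorem~\ref{th:scatteredCompetitive} finishes the argument, or forces $\ell^*$ outside $L_r$; either way a silent OPT is contradicted.

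Once the invariant is in place, the conclusion is immediate: \Dense halts (step~3.e) only after $L_{r+1}$ has been set to $\emptyset$, yet (i) asserts $\ell^*\in L_{r+1}$, a contradiction; hence OPT communicated at least once on $[t,t']$. The main obstacle I anticipate is precisely the case bookkeeping in the inductive step --- the protocol has many branches, the sets $V_1,V_2,V_3,S_1,S_2$ interact, and I must make sure that the ``correctness'' guaranteed by Lemma~\ref{le:subprotocol} is exactly the statement needed to carry the three invariant components through, and that no branch silently widens $L_r$ or misclassifies a node.
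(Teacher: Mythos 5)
Your proposal is correct and follows essentially the same route as the paper's proof: assume OPT is silent, maintain the invariant $\ell^* \in L^* \subseteq L_r$ through each rule that halves $L_r$ (3.a., 3.b.1., 3.a'., 3.b'.1.), delegate the \Subprotocol calls to Lemma~\ref{le:subprotocol}, and derive a contradiction when $L_r$ becomes empty. Your version is in fact slightly more careful than the paper's, since you make explicit the auxiliary invariants $V_1 \subseteq \mathcal{F}^*$ and $V_3 \cap \mathcal{F}^* = \emptyset$ (and verify them at the steps such as 3.c.1. that add nodes to $V_1$ or $V_3$), which the paper uses implicitly when it asserts $i \in \mathcal{F}^*$ for $i \in V_1$ in case 3.a.
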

\begin{proof}
We prove that OPT communicated by arguing that $\ell^*$, the lower endpoint of the upper filter, i.e.\ the filter for the output $\mathcal{F}^*$, is in the guess $L_r$ at each round $r$ ($\ell^* \in L^* \subseteq L_r$).
Hence we show that although if we halve the interval $L_r$, the invariant $\ell^* \in L^* \subseteq L_r$ is maintained all the time of the execution of \Dense and possible calls of \Subprotocol. 

In the following we assume to the contrary that OPT did not communicate throughout the interval $[t, t']$. 
We first argue for the execution of \Dense and assume that the invariant by calls of \Subprotocol hold by Lemma~\ref{le:subprotocol}.

First focus on the \Dense, which halves the interval $L_r$ in steps 3.a., 3.b.1., 3.a'., and 3.b'.1.:

In step 3.a. in which a node $i \in V_1$ violates its filter from above and observes a value $v_i < \ell_r$, it holds: $i \in \mathcal{F}^*$ thus, $\ell^* < \ell_r$ follows. 

In step 3.b.1. there are (strictly) more than $k$ nodes with a larger value than $u_r$.
It follows that for all subsets $\mathcal{S}$ (with $k$ elements) there is one node $i \notin \mathcal{S}$ observing a value larger than $u_r$ and thus, $\ell^* \geq (1-\varepsilon) u_r = \ell_r$ holds.  

The case 3.a'. (which is symmetric to 3.a.) is executed if a node $i \in V_3$ observed a filter-violation ($v_i > u_r$) which implies that the upper endpoint $u^*$ of filter $F_2$ is larger than $v_i$ and thus, $\ell^* \geq (1-\varepsilon) u_r = \ell_r$.

In step 3.b'.1. (which is symmetric to 3.b.1.) there are (strictly) more than $n-k$ nodes with a smaller value than $\ell_r$.
It follows that for all subsets $\mathcal{S}$ (with $k$ elements) there is one node $i \in \mathcal{S}$ observing a value smaller than $\ell_r$ and thus, $\ell^* \leq \ell_r$ holds.
\end{proof}

\begin{theorem}
There is an online algorithm for \etopk~which is $\bigO(\sigma^2 \log(\varepsilon v_k) + \sigma \log^2 (\varepsilon v_k) + \log \log \Delta + \log \frac{1}{\varepsilon})$\mbox{-competitive} against an optimal offline algorithm which may  use an error of $\varepsilon$.
\end{theorem}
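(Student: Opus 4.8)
\noindent\emph{Proof plan.}
The plan is to run \Dense in consecutive \emph{phases} and to establish two things: the $\varepsilon$-approximate offline optimum OPT communicates at least once per phase, and the online algorithm sends only $\bigO(k\log n+\sigma^2\log(\varepsilon v_k)+\sigma\log^2(\varepsilon v_k)+\log\log\Delta+\log\frac1\varepsilon)$ messages per phase in expectation; the competitiveness is then exactly this per-phase bound. A phase starts at a time $t$ with the preprocessing preceding \Dense (costing $\bigO(k\log n)$ in expectation and, when $v_{\pi(k,t)}^t\neq v_{\pi(k+1,t)}^t$, simply holding the filters $F_1=[v_{\pi(k+1,t)}^t,\infty]$, $F_2=[0,v_{\pi(k,t)}^t]$ until a violation) and then invokes \Dense. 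It ends either (i) when \Dense terminates with $L_r=\emptyset$, or (ii) when \Dense, or the \Subprotocol it calls, hands control to \Scattered in step 3.d (resp.\ step 3.e of the \Subprotocol) and that run of \Scattered terminates; in case (ii) the next phase begins with a fresh \Dense. Correctness of the composition is inherited from Lemma~\ref{le:denseCorrect}, Lemma~\ref{le:subCorrect} and the correctness of \Scattered (Lemma~\ref{le:combined}), since the subprotocol active at any time maintains a valid output $\mathcal{F}(t')$.

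For the per-phase message count I would combine the three message lemmas. The preprocessing and the initial top-$k$ computation cost $\bigO(k\log n)$; by Lemma~\ref{le:denseMsg} the \Dense part costs $\bigO(k\log n+\sigma\log(\varepsilon v_k)+(\sigma+\log(\varepsilon v_k))\cdot SUB(\sigma,|L|))$ messages, and by Lemma~\ref{le:subMsg} each call of the \Subprotocol uses $SUB(\sigma,|L|)=\bigO(\sigma\log|L|)$ messages. Since the tracked interval only shrinks, $|L|\le|L_0|=\varepsilon z\le\varepsilon v_k$ throughout, so the \Dense part costs $\bigO(k\log n+\sigma^2\log(\varepsilon v_k)+\sigma\log^2(\varepsilon v_k))$; if the phase ends by a hand-off, the single run of \Scattered adds $\bigO(k\log n+\log\log\Delta+\log\frac1\varepsilon)$ by Lemma~\ref{le:combined}. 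Adding these gives an expected $\bigO(k\log n+\sigma^2\log(\varepsilon v_k)+\sigma\log^2(\varepsilon v_k)+\log\log\Delta+\log\frac1\varepsilon)$ messages per phase, i.e.\ the bound claimed in the theorem (up to the additive $k\log n$ incurred when the top-$k$ set is recomputed at the start of each phase).

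It then remains to charge each phase to OPT. For a phase of type (i) this is exactly Lemma~\ref{le:optOnce}. For a phase of type (ii), observe that the hand-off condition of step 3.d fires only once the server has seen $k$ values larger than $u_r$ and $n-k$ values smaller than $\ell_r=(1-\varepsilon)u_r$, which certifies that throughout that phase the $k$-th and $(k+1)$-st largest values differ by more than a factor $\frac{1}{1-\varepsilon}$; hence $\mathcal{F}(t')$ is \emph{unique} and coincides with the exact Top-$k$ set. Consequently the $\varepsilon$-approximate offline algorithm is, on this subperiod, forced to output precisely this unique set, and the argument of Theorem~\ref{th:scatteredCompetitive} goes through unchanged -- with the approximate bound of Lemma~\ref{le:filterChange} replacing its exact version -- so that the invariant $L^*\subseteq L$ is kept and the emptiness of $L$ at the termination of \Scattered forces the approximate OPT to have communicated as well. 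Together with the per-phase message bound, and the fact that the phases partition the time horizon, this yields the stated competitiveness.

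The step I expect to be the main obstacle is this last reconciliation of the two adversary models: one must verify that the hand-off to \Scattered is triggered \emph{exactly} when the $\varepsilon$-slack available to the offline algorithm can no longer help it, i.e.\ that on a clean, unique configuration the feasible interval $L^*$ for the offline filter boundary is already the one used in the $(1-\varepsilon)$-relaxed analysis of \Scattered; otherwise \Scattered's guarantee -- proven in Theorem~\ref{th:scatteredCompetitive} against an \emph{exact} adversary -- would not transfer to the approximate adversary. A more mechanical secondary point is checking that the halvings of $L$ in \Dense and in the \Subprotocol, together with the strategies $\mathcal{A}_1,\mathcal{A}_2,\mathcal{A}_3$ used inside \Scattered, compose across all hand-offs without ever violating the invariant $\ell^*\in L^*\subseteq L$, and that the expected message counts indeed add up as above when several of these modules run within one phase.
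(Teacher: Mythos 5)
Your proposal matches the paper's proof in all essentials: both decompose the execution into consecutive phases, bound each phase's messages by summing Lemma~\ref{le:denseMsg} and Lemma~\ref{le:subMsg} (with $SUB(\sigma,|L|)=\bigO(\sigma\log|L|)$ and $|L|\leq\varepsilon v_k$) plus the cost of \Scattered from Lemma~\ref{le:combined}, and charge one OPT message per phase via Lemma~\ref{le:optOnce}, respectively Theorem~\ref{th:scatteredCompetitive}. The only structural difference is that the paper branches up front --- calling \Scattered directly whenever $v_{\pi(k+1,t)}^t < (1-\varepsilon)v_{\pi(k,t)}^t$ (so the output is unique and the exact-adversary guarantee applies) and \Dense otherwise --- whereas you always enter \Dense and reach \Scattered only through the step-3.d hand-off; on the point you flag as the main obstacle (transferring \Scattered's exact-adversary guarantee to the approximate adversary on a unique configuration), the paper's own proof is no more detailed than yours, simply asserting that uniqueness of the output suffices.
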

\begin{proof}
The algorithm works as follows. 
At time $t$ at which the algorithm is started, the algorithm probes the nodes holding the $k+1$ largest values.
If $v_{\pi(k+1, t)}^t < (1-\varepsilon)v_{\pi(k, t)}^t$ holds, the algorithm \Scattered is called. 
Otherwise the algorithm \Dense is executed.
After termination of the respective call, the procedure starts over again.

Observe that if the condition holds, there is only one unique output and thus, the \Scattered monitors the Top-$k$-Positions satisfiying the bound on the competitiveness as stated in Theorem~\ref{th:scatteredCompetitive}. 
If the condition does not hold, there is at least one value in the $\varepsilon$-neighborhood of $v_{\pi(k, t)}^t$ and thus, the \Dense monitors the approximated Top-$k$-Positions as analyzed in this section.

The number of messages used is simply obtained by adding the number of messages used by the respective algorithms as stated above.
\end{proof}

To obtain the upper bounds stated at the beginning, we upper bound $\sigma$ by $n$ and $v_k$ by $\Delta$:
$\bigO(n^2 \log(\varepsilon \Delta) + n \log^2 (\varepsilon \Delta) + \log \log \Delta + \log \frac{1}{\varepsilon})$.
Note that for constant $\varepsilon$ we obtain a slightly simpler bound of
$\bigO(n^2 \log \Delta + n \log^2 \Delta )$ on the competitiveness.

\begin{corollary}
There is an online algorithm for \etopk~which is $\bigO(\sigma + k \log n + \log \log \Delta + \log \frac{1}{\varepsilon})$-competitive against an optimal offline algorithm which may use an error of $\varepsilon' \leq \frac{\varepsilon}{2}$.
\end{corollary}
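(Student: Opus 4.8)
The plan is to keep the two-branch algorithm of the previous theorem — probe the nodes holding the $k+1$ largest values; if $v_{\pi(k+1,t)}^t<(1-\varepsilon)v_{\pi(k,t)}^t$ the output is unique, run \Scattered, and Theorem~\ref{th:scatteredCompetitive} already gives the bound since a unique output forces $\mathcal{F}^{*}=\mathcal{F}_1$ irrespective of whether OPT is exact; otherwise run a variant of \Dense — and to show that under the extra hypothesis $\varepsilon'\le\varepsilon/2$ this \Dense‑branch never needs \Subprotocol and spends only $\bigO(\sigma)$ ``node'' messages plus $\bigO(\log\log\Delta+\log\frac{1}{\varepsilon})$ ``interval'' messages, so its cost collapses from $\bigO(\sigma^{2}\log(\varepsilon v_k)+\sigma\log^{2}(\varepsilon v_k)+\dots)$ to $\bigO(\sigma+k\log n+\log\log\Delta+\log\frac{1}{\varepsilon})$. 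Correctness of the output and the statement ``OPT communicated at least once on $[t,t']$'' are then literally Lemma~\ref{le:denseCorrect} and Lemma~\ref{le:optOnce}: the narrowing rules only use $\varepsilon'\le\varepsilon$ to keep the invariant $\ell^{*}\in L^{*}\subseteq L_r$, and an empty $L$ at termination forces $L^{*}=\emptyset$. Competitiveness follows by dividing by OPT's (at least one) message, and upper bounding $\sigma\le n$ is not needed here since $\sigma$ already appears in the claim.

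The decisive point is where $\varepsilon'\le\varepsilon/2$ buys slack: $(1-\varepsilon')^{2}\ge(1-\varepsilon/2)^{2}=1-\varepsilon+\varepsilon^{2}/4\ge 1-\varepsilon$, equivalently $\frac{1}{1-\varepsilon}\ge\frac{1}{(1-\varepsilon')^{2}}$, so the multiplicative width $\frac{1}{(1-\varepsilon')^{2}}$ of the $\varepsilon'$‑neighbourhood of $v_{\pi(k,t)}^t$ — into which OPT's choice from that neighbourhood, hence OPT's filter endpoint $\ell^{*}$, is confined — does not exceed the maximal gap $\frac{1}{1-\varepsilon}$ that a valid $\varepsilon$‑filter pair $[\ell_r,\infty],[0,u_r]$ with $u_r=\frac{1}{1-\varepsilon}\ell_r$ may span. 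Using this, after probing all $\sigma$ nodes of $A(t)$ the algorithm can, once and for all, put every probed node with $v_i^t>\frac{1}{1-\varepsilon/2}v_{\pi(k,t)}^t$ into $V_1$ (it then lies in $\mathcal{F}^{*}$) and every node with $v_i^t<(1-\varepsilon/2)v_{\pi(k,t)}^t$ into $V_3$ (it cannot lie in $\mathcal{F}^{*}$), leaving only the nodes of the $\frac{\varepsilon}{2}$‑neighbourhood; choosing $\ell_r\coloneqq(1-\varepsilon/2)v_{\pi(k,t)}^t$, the window $[\ell_r,u_r]$ contains that whole $\frac{\varepsilon}{2}$‑neighbourhood, so \emph{any} split of those nodes between $V_1$ and $V_3$ is a valid set of filters and no such node can violate until it leaves $[\ell_r,u_r]$. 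Hence the case that drives \Dense's cost — a node simultaneously in $S_1$ and $S_2$, i.e.\ one that has been both above $u_r$ and below $\ell_r$ while OPT was silent — forces $\ell^{*}<\ell_r$ if it belongs to $\mathcal{F}^{*}$ and $\ell^{*}\ge(1-\varepsilon')u_r>\ell_r$ otherwise, so the ``more than $k$ nodes above $u_r$'' / ``more than $n-k$ below $\ell_r$'' tests of \Dense's steps~3.b.1/3.b'.1 resolve it by a single halving of $L$ (preserving $\ell^{*}\in L^{*}\subseteq L_r$) instead of a sub‑search; running this halving through \Scattered's four phases (P1–P4) keeps the total number of narrowing messages at $\bigO(\log\log\Delta+\log\frac{1}{\varepsilon})$.

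It then remains to total the messages: the initial top‑$k$ computation and the probing of $A(t)$ cost $\bigO(k\log n+\sigma)$; each of the $\sigma$ nodes of $A(t)$ is reclassified into $V_1$ or $V_3$ at most once (leaving $A(t)$ for good) at $\bigO(1)$ messages via \textsc{ExistenceProtocol}; each halving of $L$ is a single broadcast of the new endpoint and there are $\bigO(\log\log\Delta+\log\frac{1}{\varepsilon})$ of them; and the residual ``exact‑like'' behaviour, when the $\frac{\varepsilon}{2}$‑neighbourhood empties and the fall‑back of \Dense's step~3.d fires, is handled by \Scattered at $\bigO(\log\log\Delta+\log\frac{1}{\varepsilon})$. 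I expect the main obstacle to lie exactly in making this accounting airtight while the $k$‑th largest value drifts: then the proxy threshold $v_{\pi(k,t)}^t$ no longer brackets the current neighbourhood, so one must verify — as in Lemmas~\ref{le:denseCorrect} and~\ref{le:denseMsg} — that the two counting tests together with the \Scattered fall‑back keep the number of halvings and fall‑backs within the stated bound (so that the $\sigma\log(\varepsilon v_k)$ and $SUB(\sigma,|L|)$ terms of Lemma~\ref{le:denseMsg} really are replaced by $\bigO(\sigma+\log\log\Delta+\log\frac{1}{\varepsilon})$) and that every reclassification swap keeps the output valid and of size exactly $k$; adding the pieces yields the claimed $\bigO(\sigma+k\log n+\log\log\Delta+\log\frac{1}{\varepsilon})$.
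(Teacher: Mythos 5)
Your overall architecture matches the paper's: the same two\nobreakdash-branch dispatch (probe the top $k+1$; run \Scattered when $v_{\pi(k+1,t)}^t<(1-\varepsilon)v_{\pi(k,t)}^t$, otherwise a modified \Dense), and the same decisive inequality $(1-\varepsilon/2)^2\geq 1-\varepsilon$, which is what lets the online algorithm use the fixed thresholds $(1-\tfrac{\varepsilon}{2})z$ and $\tfrac{1}{1-\varepsilon}(1-\tfrac{\varepsilon}{2})z$ and still derive a contradiction with the filter condition $\ell^*\geq(1-\varepsilon')u^*$ of an $\varepsilon'$-restricted OPT. Your classification of the probed nodes into $V_1$ and $V_3$ and the charge of $\bigO(1)$ messages per permanently reclassified node are exactly the paper's mechanism.

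The gap is in how you handle the nodes that remain in $V_2$. You keep the interval $L$, the halving steps 3.b.1/3.b$'$.1, and the $S_1\cap S_2$ case of \Dense, and you claim (i) that a node which has been both above $u_r$ and below $\ell_r$ is ``resolved by a single halving'' via the counting tests, and (ii) that the total number of halvings is $\bigO(\log\log\Delta+\log\tfrac{1}{\varepsilon})$ by ``running the halving through \Scattered's phases.'' Neither holds. For (i), the dichotomy you state ($\ell^*<\ell_r$ if $i\in\mathcal{F}^*$, $\ell^*>\ell_r$ otherwise) is precisely the ambiguity that \Subprotocol exists to resolve; the counting tests need not fire when a single node straddles both endpoints, so the online algorithm cannot decide which half of $L$ retains $\ell^*$ without a sub-search. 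For (ii), $L_0=[(1-\varepsilon)z,z]$ has length $\varepsilon z$, so halving it takes $\Theta(\log(\varepsilon v_k))$ rounds, each of which can cost $\Theta(\sigma)$ messages when $S_1$ or $S_2$ is reset and the $V_2$-nodes violate again --- this is exactly the $\sigma\log(\varepsilon v_k)$ term of Lemma~\ref{le:denseMsg} that the corollary must eliminate, and the (P1)--(P4) phase argument of \Scattered does not transfer to this interval. The paper's proof avoids the issue entirely: in the $\varepsilon'$-branch there is \emph{no} interval $L$, no halving, and no $S$-sets; a $V_2$-node is moved permanently to $V_1$ or $V_3$ on its \emph{first} violation (total cost $\bigO(\sigma)$), and any violation by an already classified node, or a count of more than $k$ nodes in $V_1$ or fewer than $k$ in $V_1\cup V_2$, terminates the phase and is charged to one forced message of OPT. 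To repair your proof, drop the halving machinery and instead argue directly that a first violation permanently and correctly classifies the node against any $\varepsilon'$-OPT that has not communicated, and that a second violation by a classified node yields the contradiction $\ell^*\leq v_i<(1-\tfrac{\varepsilon}{2})z$ together with $u^*\geq z$ (or its symmetric counterpart).
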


\begin{proof}
The algorithm works as follows. 
At the initial time step $t$ the algorithm probes the nodes holding the $k+1$ largest values.
If $v_{\pi(k+1, t)}^t < (1-\varepsilon)v_{\pi(k, t)}^t$ holds the algorithm \Scattered is called. 

Otherwise the online algorithm simulates the first round of the \Dense, that is nodes are partitioned into $V_1, V_2,$ and $V_3$ and the filters are defined as proposed (cf step 2. of \Dense).
Here all nodes with values larger than $\frac{1}{1-\varepsilon}(1-\frac{\varepsilon}{2}) z$ are directly added to $V_1$ instead of adding to $S_1$, and nodes observing values smaller than $(1-\frac{\varepsilon}{2}) z$ are added to $V_3$.
Furthermore, if a filter-violation from some node $i \in V_2$ is observed, it is directly moved (deleted from $V_2$ and added) to $V_1$ in case it violates from below, and added to $V_3$ if violated from above.

Whenever a node from $V_1$ (or from $V_3$) violates its filter the algorithm terminates.
Additionally if (strictly) more than k nodes are in $V_1$ the algorithm is terminated or if (strictly) less than k nodes are in $V_1 \cup V_2$. 
If exactly k nodes are in $V_1$ and $n-k$ nodes are in $V_3$ the \Scattered is executed.

For the following argumentation on the competitiveness we focus on the case that \Scattered was not called since the analysis of \Scattered holds here.
Observe that OPT (with an error of $\varepsilon'$) had to communicate based on the following observation:

Let $t'$ be the time at which the algorithm terminates.
Assume to the contrary that OPT did not communicate during $[t, t']$.
In case node $i \in V_1$ observes a filter-violation from above, $\left(v_i < (1-\frac{\varepsilon}{2}) z \right)$ and $\varepsilon' \leq \frac{\varepsilon}{2}$, OPT had to set $\ell^* \leq v_i$ and $u^* \geq z$, which leads to a contradiction to the definition of filters. 
In case node $i \in V_3$ observes a filter-violation from below, $\left( v_i >\frac{1}{1-\varepsilon}(1-\frac{\varepsilon}{2})z \right)$ and $\varepsilon' \leq \frac{\varepsilon}{2}$, OPT had to set $u^* \geq v_i$ and $\ell^* \leq z$, which leads to a contradiction to the definition of filters.
The fact that OPT had to communicate in the remaining cases follows by the same arguments.
Since all cases lead to a contradiction, the bound on the competitiveness as stated above follows.
\end{proof}

\end{document}